\def\va{{\bm{a}}}
\def\vw{{\bm{w}}}
\def\vm{{\bm{m}}}
\def\vc{{\bm{c}}}
\def\vx{{\bm{x}}}
\def\vy{{\bm{y}}}
\def\vb{{\bm{b}}}
\def\ve{{\bm{e}}}
\def\vzero{{\bm{0}}}
\def\vz{{\bm{z}}}
\newcommand{\bb}[1]{\mathbb{#1}}
\newcommand{\tG}{\ensuremath{\tilde G}}
\newtheorem{theorem}{Theorem}
\DeclareMathOperator*{\minimize}{minimize}
\newcommand{\opt}{\ensuremath{\mathrm{opt}}}
\newcommand{\fra}{\ensuremath{\mathrm{frac}}}
\newcommand{\bin}{\ensuremath{\mathrm{bin}}}
\newtheorem{fact}[theorem]{Fact}
\newtheorem{lemma}[theorem]{Lemma}
\DeclareMathOperator*{\argmin}{arg\,min}
\newcommand\myeq{\stackrel{\mathrm{def}}{=}}
\DeclareMathOperator*{\argmax}{arg\,max}
\renewcommand{\top}{{\ensuremath{\mathsf T}}}
\newcommand{\truth}{{\ensuremath{\textup{truth}}}}
\newcommand{\tkmeans}{\ensuremath{T_{k\textup{-means}}}}
\title{Approximation Algorithms for Orthogonal Non-negative Matrix Factorization}
\author{Moses Charikar 
\thanks{Computer Science Department, Stanford University.
Email: \texttt{moses@cs.stanford.edu}.
Supported by a Simons Investigator Award, a Google Faculty Research Award and an Amazon Research Award.
} 
\and 
Lunjia Hu 
\thanks{Computer Science Department, Stanford University. 
Email: \texttt{lunjia@stanford.edu}.
Supported by NSF Award IIS-1908774 and a VMware fellowship.}}
\date{}
\begin{document}
\maketitle
\begin{abstract}%
In the non-negative matrix factorization (NMF) problem, the input is an $m\times n$ matrix $M$ with non-negative entries and the goal is to factorize it as $M\approx AW$. The $m\times k$ matrix $A$ and the $k\times n$ matrix $W$ are both constrained to have non-negative entries. This is in contrast to singular value decomposition, where the matrices $A$ and $W$ can have negative entries but must satisfy the orthogonality constraint: the columns of $A$ are orthogonal and the rows of $W$ are also orthogonal. The orthogonal non-negative matrix factorization (ONMF) problem imposes both the non-negativity and the orthogonality constraints, and previous work showed that it leads to better performances than NMF on many clustering tasks. We give the first constant-factor approximation algorithm for ONMF when one or both of $A$ and $W$ are subject to the orthogonality constraint. We also show an interesting connection to the correlation clustering problem on bipartite graphs. Our experiments on synthetic and real-world data show that our algorithm achieves similar or smaller errors compared to previous ONMF algorithms while ensuring perfect orthogonality (many previous algorithms do not satisfy the hard orthogonality constraint).
\end{abstract}
\section{Introduction}
Low-rank approximation of matrices is a fundamental technique in data analysis. Given a large data matrix $M$ of size $m\times n$, 
the goal is
to approximate it by a low-rank matrix $AW$ where $A$ has size $m\times k$ and $W$ has size $k\times n$. Here $k$ is called the \emph{inner dimension} of the factorization $M \approx AW$, controlling the rank of $AW$. 
Such low-rank matrix decomposition enables a succinct and often more interpretable representation of the original data matrix $M$.

One of the standard approaches of low-rank approximation is singular value decomposition (SVD) \citep{wold1987principal,alter2000singular, papadimitriou2000latent}. SVD computes a solution minimizing both the Frobenius norm $\|M-AW\|_F$ and the spectral norm $\sigma_{\textnormal{max}}(M-AW)$ \citep{eckart1936approximation,mirsky1960symmetric}. In addition, SVD always gives a solution with the orthogonality property: the columns of $A$ are orthogonal and the rows of $W$ are also orthogonal. Orthogonality makes the factors more separable, and thus causes the low-rank representation to have a cleaner structure. 

However, in certain cases the data matrix $M$ is inherently non-negative, with entries corresponding to frequencies or probability mass, and in these cases SVD has a serious limitation: the factors $A$ and $W$ computed by SVD often contain negative entries, making the factorization less interpretable. Non-negative matrix factorization (NMF), which constrains $A$ and $W$ to have non-negative entries, is better suited to these cases, and is applied in many domains including computer vision \citep{lee1999learning, li2001learning}, text mining \citep{xu2003document, pauca2004text} and bioinformatics \citep{brunet2004metagenes, kim2007sparse,devarajan2008nonnegative}.

One drawback of NMF relative to SVD is that it gives less separable factors: the angle between any two columns of $A$ or any two rows of $W$ is at most $\pi/2$ simply because the inner product of a pair of vectors with non-negative coordinates is always non-negative. 
To reap the benefits of non-negativity and orthogonality simultaneously, orthogonal NMF (ONMF)
adds orthogonality constraints to NMF on one or both of the factors $A$ and $W$: the columns of $A$ and/or the rows of $W$ are required to be orthogonal. Indeed, ONMF leads to better empirical performances in many clustering tasks \citep{ding2006orthogonal,choi2008algorithms,yoo2010nonnegative}.
While previous works showed ONMF algorithms that converge to local minima \citep{ding2006orthogonal} 
and an efficient polynomial-time approximation scheme (EPTAS) assuming the inner-dimension is a constant \citep{asteris2015orthogonal},
a theoretical understanding of the worst-case guarantee one can achieve for ONMF with arbitrary inner-dimension is lacking.
In this work, we show the first constant-factor approximation algorithm for ONMF 
with respect to the squared Frobenius error $\|M - AW\|_F^2$
when the orthogonality constraint is imposed on one or both of the factors.

\paragraph{Our Results} 
We use approximation algorithms for weighted $k$-means as subroutines, such as the $(9+\varepsilon)$-approximation local search algorithm by \citet{kanungo2002local}. Assuming an $r$-approximation algorithm for weighted $k$-means, we show algorithms for ONMF with approximation ratio $2r$ in the \emph{single-factor orthogonality} setting where only one of the factors $A$ or $W$ is required to be orthogonal (Theorem \ref{thm:single}), and approximation ratio $\left(2r + \frac{8r + 8}{\sin^2(\pi/12)}\right)$ in the \emph{double-factor orthogonality} setting where both $A$ and $W$ are required to be orthogonal (Theorem \ref{thm:main}). Here, $A$ (resp.\ $W$) being orthogonal means that its columns (resp.\ rows) are orthogonal but not necessarily of unit length. The approximation ratios are provable upper bounds for the ratio between the error of the output $(A, W)$ of the algorithm and the minimum error over all feasible solutions $(A, W)$, with error measured using the squared Frobenius norm $\|M - AW\|_F^2$.
We also demonstrate the superior practical performance of our algorithms by experiments in both the single-factor and the double-factor orthogonality setting on synthetic and real-world datasets (see Section~\ref{sec:experiments} and Appendix~\ref{sec:additional-exp}).

\paragraph{Sparse Structure of Solution} 
When we impose the orthogonality constraint on both the columns of $A$ and the rows of $W$,
the non-negativity and the orthogonality constraints together cause the solution to ONMF to have a very sparse structure. 
Let $\va_i$ denote the $i$-th column of $A$ and $\vw_i^\top$ denote the $i$-th row of $W$. Since $\va_i$ and $\va_j$ are constrained to have non-negative entries but zero inner product, they have disjoint supports, and this also holds for $\vw_i$ and $\vw_j$. As a result, $AW = \sum_{i=1}^k\va_i\vw_i^\top$ naturally consists of $k$ disjoint blocks, as shown in Figure \ref{fig}. 

If the input $M$ factorizes as $M = AW$ exactly, we can easily recover $A$ and $W$ based on the block-wise structure of $M$. Therefore, we focus on the agnostic setting where $M=AW$ does not hold exactly, and design approximation algorithms that find solutions comparable to the best possible factorization.

\begin{figure}[h]
\label{fig}
\centering
\includegraphics[width = 0.7\textwidth]{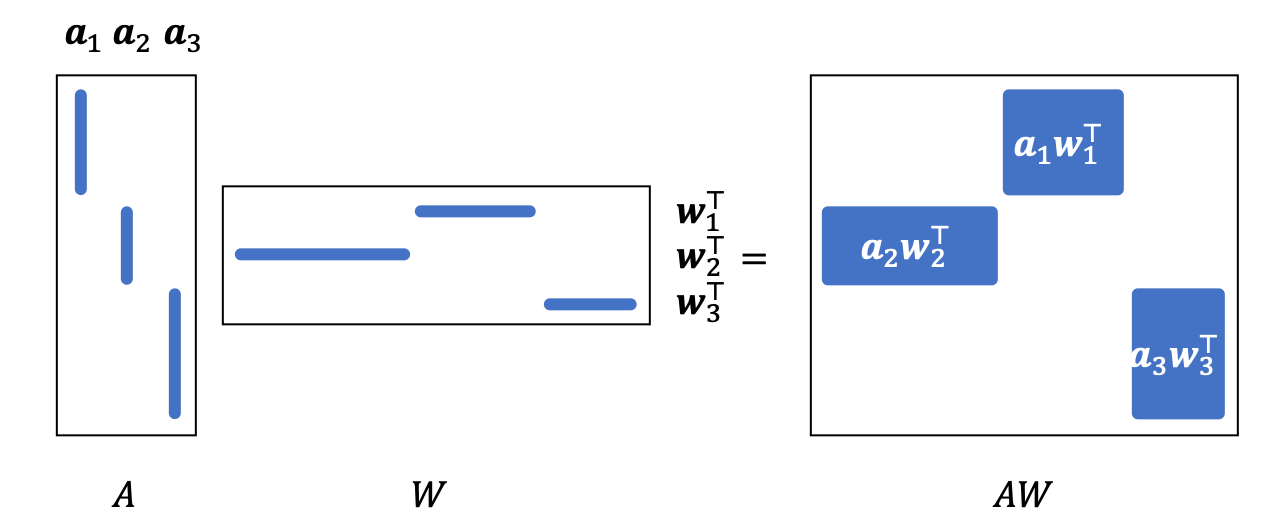}
\caption{The $k$ columns of $A$ have disjoint supports. The $k$ rows of $W$ also have disjoint supports. The product $AW$ has entries equal to zero outside the $k$ blocks.}
\end{figure}

\paragraph{Connection to Bipartite Correlation Clustering} %

The block-wise structure of $AW$ (Figure \ref{fig}) relates ONMF to the correlation clustering problem \citep{bansal2004correlation} on complete bipartite graphs.

To see the relationship with correlation clustering, let us consider a data matrix $M$ with binary entries and assume $k\geq\min\{m,n\}$. Since we can find at most $\min\{m,n\}$ non-zero $\va_i\vw_i^\top$ satisfying the orthogonal constraint, all $k\geq\min\{m,n\}$ give equivalent problems, where any inner-dimension is considered feasible. $M$ can be treated as a complete bipartite graph with vertices $\{u_1,\ldots,u_m\}\cup\{v_1,\ldots,v_n\}$ and edges $(u_i,v_j)$ labeled ``$+$'' if $M_{ij} = 1$ or ``$-$'' if $M_{ij} = 0$. This edge-labeled complete bipartite graph is exactly an instance of the correlation clustering problem. If the factors $A$ and $W$ also have binary entries and both satisfy the orthogonality constraint, the blocks of $AW = \sum_{i=1}^k\va_i\vw_i^\top$ (see Figure \ref{fig}) are all-ones matrices corresponding to vertex-disjoint complete bipartite sub-graphs. This is exactly the form of a solution to the correlation clustering problem, and the objective $\|M-AW\|_F^2$ is exactly the number of disagreements in the correlation clustering problem. Although our algorithm (specifically, the algorithm in Theorem \ref{thm:k>=n}) doesn't impose the binary constraint on $A$ and $W$, we can apply the following lemma to each block of $AW$ to round the solution to binary with only a constant loss in the objective (see Appendix \ref{sec:rounding} for proof): 
\begin{lemma}
\label{lm:round}
Let $M\in\{0,1\}^{m\times n}$ be a binary matrix. Let $\va\in\bb R^m_{\geq 0}$ and $\vw\in\bb R^n_{\geq 0}$ be two non-negative vectors. Then, there exist binary vectors $\hat\va\in\{0,1\}^m$ and $\hat\vw\in\{0,1\}^n$ such that
\begin{equation*}
    \|M-\hat\va\hat\vw^\top\|_F^2\leq
    8\|M-\va\vw^\top\|_F^2.
\end{equation*}
Moreover, $\hat\va$ and $\hat\vw$ can be computed in poly-time.
\end{lemma}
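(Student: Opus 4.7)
My plan combines a rescaling reduction, a trivial case, and a hard case handled by threshold-rounding enumeration. Since $\va\vw^\top = (\gamma\va)(\vw/\gamma)^\top$ for every $\gamma > 0$, I may assume without loss of generality that $\|\va\|_\infty = \|\vw\|_\infty =: c$ by picking $\gamma = \sqrt{\|\vw\|_\infty / \|\va\|_\infty}$ (the case of a zero component is trivial). If $c^2 \leq 1/2$, then every entry of $\va\vw^\top$ is at most $1/2$, so each $(i,j)$ with $M_{ij} = 1$ satisfies $(1 - a_i w_j)^2 \geq 1/4$; summing yields $|\{(i,j) : M_{ij} = 1\}| \leq 4\|M - \va\vw^\top\|_F^2$. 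Taking $\hat\va = \hat\vw = \vzero$ then gives $\|M - \hat\va\hat\vw^\top\|_F^2 = |\{(i,j) : M_{ij} = 1\}| \leq 4\|M - \va\vw^\top\|_F^2 \leq 8\|M - \va\vw^\top\|_F^2$, handling this case.

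\textbf{Hard case.} When $c^2 > 1/2$, I enumerate the polynomially many threshold rectangles $R(\alpha, \beta) = \{i : a_i \geq \alpha\} \times \{j : w_j \geq \beta\}$ for $(\alpha, \beta) \in \{a_1, \ldots, a_m\} \times \{w_1, \ldots, w_n\}$ (plus the empty rectangle); for each I set $\hat a_i = 1$ iff $a_i \geq \alpha$ and $\hat w_j = 1$ iff $w_j \geq \beta$, compute $\|M - \hat\va\hat\vw^\top\|_F^2$, and return the minimizer in polynomial time. To analyze the minimum, define the entrywise rounding $\tilde Q \in \{0,1\}^{m \times n}$ by $\tilde Q_{ij} = 1$ iff $a_i w_j \geq 1/2$. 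Whenever $\tilde Q_{ij} \neq M_{ij}$ we have $|a_i w_j - M_{ij}| \geq 1/2$, so this entry contributes at least $1/4$ to $\|M - \va\vw^\top\|_F^2$; summing gives $|M_1 \triangle \tilde Q| \leq 4\|M - \va\vw^\top\|_F^2$, where $M_1 = \{(i,j) : M_{ij} = 1\}$ and $\tilde Q$ is identified with its support set. Since $\|M - \hat\va\hat\vw^\top\|_F^2 = |M_1 \triangle R|$ for any $R = R(\alpha, \beta)$, the triangle inequality for symmetric differences gives $|M_1 \triangle R| \leq 4\|M - \va\vw^\top\|_F^2 + |\tilde Q \triangle R|$, so it suffices to exhibit a rectangle $R^*$ in the candidate family with $|\tilde Q \triangle R^*| \leq 4\|M - \va\vw^\top\|_F^2$.

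\textbf{Main obstacle.} The hardest step is exhibiting such an $R^*$. After sorting $\va$ and $\vw$ in decreasing order, $\tilde Q$ is a Young diagram that may be combinatorially far from any rectangle, so a purely structural rectangle-approximation argument does not suffice. The proof must exploit the rank-one structure of $\va\vw^\top$: the entries $p_{ij} = a_i w_j$ obey the multiplicative identity $p_{ij} p_{i'j'} = p_{i'j} p_{ij'}$, which severely constrains how far $\tilde Q$ can depart from rectangular. Informally, whenever the staircase $\tilde Q$ deviates substantially from a rectangle, the corresponding entries of $\va\vw^\top$ either lie close to $1/2$ (each contributing $\Omega(1)$ to $\|M - \va\vw^\top\|_F^2$ whenever it disagrees with $M$) or are forced far from $\{0,1\}$ (likewise inflating the error). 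I would construct $R^*$ by sweeping $(\alpha^*, \beta^*)$ along the hyperbola $\alpha\beta = 1/2$, i.e., considering the candidates $(a_i, 1/(2 a_i))$ for each $i \in [m]$, and choosing the sweep point that balances the under-coverage of $\tilde Q$ against the contributions in $\|M - \va\vw^\top\|_F^2$ from near-threshold entries. Turning this balancing into a clean inequality with the sharp constant $4$ on the right-hand side of $|\tilde Q \triangle R^*| \leq 4\|M - \va\vw^\top\|_F^2$ is the main technical work; everything else is bookkeeping.
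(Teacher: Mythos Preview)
Your proposal has a genuine gap: the central inequality is left unproved. You reduce everything to exhibiting a threshold rectangle $R^*$ with $|\tilde Q \triangle R^*| \leq 4\|M - \va\vw^\top\|_F^2$, then explicitly label this ``the main technical work'' and offer only an informal sketch (sweeping along the hyperbola $\alpha\beta = 1/2$ and ``balancing''). Note that $|\tilde Q \triangle R(\alpha,\beta)|$ depends only on $\va,\vw$, so what you would actually need is a purely analytic inequality of the form
\[
\min_{\alpha,\beta}\,|\tilde Q \triangle R(\alpha,\beta)| \;\leq\; 4\sum_{i,j}\min\bigl\{(a_iw_j)^2,(a_iw_j-1)^2\bigr\},
\]
and you give no argument for this, nor is it clear it holds with constant $4$. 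Your remark that near-$1/2$ entries contribute $\Omega(1)$ ``whenever it disagrees with $M$'' is slightly off---such entries contribute $\approx 1/4$ regardless of $M$---and in any case does not by itself control $|\tilde Q\triangle R^*|$, since the cells of $\tilde Q\triangle R^*$ need not coincide with the cells where $a_iw_j$ is near $1/2$.

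The paper bypasses this entirely with a short direct construction. It sets $\hat\va = \vm_{i^*}$ where $i^* \in \argmin_i \|\vm_i/w_i - \va\|_2$; since $\hat\va$ is a column of $M$, it is automatically binary. The doubled triangle inequality together with the optimality of $i^*$ give $\|\vm_i - (w_i/w_{i^*})\hat\va\|_2^2 \leq 4\|\vm_i - w_i\va\|_2^2$ for every $i$. Then each $\hat w_i\in\{0,1\}$ is chosen by testing whether the support of $\vm_i$ covers at least half of the support of $\hat\va$; a short Cauchy--Schwarz computation shows this rounding costs at most another factor $2$, yielding the overall factor $8$ column by column. No enumeration and no staircase or Young-diagram analysis is needed.
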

Thus, we can obtain an approximation algorithm for minimizing disagreements in complete bipartite graphs via our approximation algorithm for ONMF in Theorem \ref{thm:k>=n}.
Moreover, without the binary constraint on $M,A,W$, 
ONMF with orthogonality constraint on both $A$ and $W$ can be treated as a soft version of bipartite correlation clustering.

\paragraph{Open Questions} We used the Frobenius norm as a natural measure of goodness of fit, but it would be interesting to see if one can achieve constant-factor approximation with respect to other measures, such as the spectral norm, since the two norms can be different by a factor that grows with $\min\{m,n\}$. It would also be interesting to consider replacing the orthogonality constraint on $A$ and $W$ by a lower bound $\theta < \pi/2$ on the angles between different columns of $A$ and different rows of $W$.

\paragraph{Related Work} Non-negative matrix factorization was first proposed by \citet{paatero1994positive}, and was shown to be NP-hard by \citet{vavasis2010complexity}. Algorithmic frameworks for efficiently finding local optima include the multiplicative updating framework \citep{lee2001algorithms} and the alternating non-negative least squares framework \citep{lin2007projected, kim2011fast}. Under the usually mild \emph{separability} assumption, \citet{arora2016computing} showed an efficient algorithm that computes the global optimum.

\cite{ding2006orthogonal} first studied NMF with the orthogonality constraint, and showed its effectiveness in document clustering. After that, algorithms for ONMF using various techniques have been developed for a broad range of applications \citep{chen2009collaborative,ma2010orthogonal,kuang2012symmetric,pompili2013onp,li2014discriminative,kim2015mutation,qin2016community,alaudah2017weakly,huang2019collaborative}.
The less restrictive single-factor orthogonality setting attracted the most attention, and most algorithms for solving it belong to the multiplicative updating framework: iteratively updating $A$ and/or $W$ by taking the element-wise product with other computed non-negative matrices
\citep{yang2007multiplicative,choi2008algorithms,yoo2008orthogonal,yoo2010nonnegative,yang2010linear,pan2018orthogonal,he2020low}.
Other techniques include HALS (hierarchical alternating least squares) \citep{li2014two,kimura2016column} and using a penalty function \citep{del2009penalty} for the orthogonality constraint.

While improving the separability of the factors compared to NMF, these algorithm {\em do not guarantee convergence to a solution that has perfect orthogonality} (which is also demonstrated in our experiments).
There are only a few previous algorithms that have this guarantee, including the EM-ONMF algorithm \citep{pompili2014two}, the ONMFS algorithm \citep{asteris2015orthogonal} and the NRCG-ONMF algorithm \cite{zhang2016efficient}. ONMFS is the only previous algorithm we know that has a provable approximation guarantee, but it has a running time exponential in the squared inner dimension.
\citep{pompili2014two} give a reduction of ONMF to spherical $k$-means with a somewhat non-standard objective function:
the goal is to minimize the sum of $1$ minus the square of cosine similarity, while the commonly studied objective function for spherical $k$-means sums up $1$ minus the cosine similarity.
Our results for ONMF imply a constant factor approximation for this variant of spherical $k$-means with the squared cosine similarity in the objective.
Many variants of ONMF have also been studied in the literature, including the semi-ONMF \citep{li2018semi} and the sparse ONMF \citep{chen2018soft,li2020two}.

We would also like to point out that the connection between ONMF and $k$-means shown in \citep[Theorems 1 and 2]{ding2006orthogonal} does not give a reduction in either direction. Their proof shows that the optimization problem associated with $k$-means is essentially ONMF, but \emph{with additional constraints}: the matrix $G$ in the ONMF formulation (8) in \citep{ding2006orthogonal} is replaced by matrix $\tG$ in the $k$-means formulation (11) in \citep{ding2006orthogonal}. However $\tG$ is a “normalized cluster indicator matrix” that is more constrained than the generic matrix $G$ with orthonormal columns because the entries in every column of $\tG$ are either zero or take the same non-zero value. This additional constraint makes their argument insufficient to either directly derive an algorithm for ONMF with the same approximation guarantee given one for $k$-means, or the other way around. Also, later works such as \cite{yoo2010nonnegative} and \cite{asteris2015orthogonal} used techniques different from $k$-means to improve the empirical performance of ONMF.

The correlation clustering problem was proposed by \citet{bansal2004correlation} on complete graphs, who showed a constant factor approximation algorithm for the disagreement minimization version and a polynomial-time approximation scheme (PTAS) for the agreement maximization version. \citet{ailon2008aggregating} showed a simple combinatorial algorithm achieving an approximation ratio of 3 in the disagreement minimization version, and \citet{chawla2015near} improved the approximation ratio to the currently best 2.06. \citet{chawla2015near} also showed a $3$-approximation algorithm on complete $k$-partite graphs.

\section{Weighted \texorpdfstring{$k$}{k}-Means}
\label{sec:preliminaries}
The $k$-means problem is a fundamental clustering problem, and we will apply algorithms for its weighted version as subroutines to solve our orthogonal NMF problem. Given points $\vm_1,\ldots,\vm_n\in\bb R^m$ and their weights $\ell_1,\ldots,\ell_n\in\bb R_{\geq 0}$, the weighted $k$-means problem seeks $k$ centroids $\vc_1,\ldots,\vc_k$ and an assignment mapping $\phi:\{1,\ldots,n\}\rightarrow\{1,\ldots,k\}$ that solve the following optimization problem:
\begin{equation*}
\minimize_{\vc_1,\ldots,\vc_k;\phi} ~ \sum_{i=1}^n\ell_i\|\vm_i - \vc_{\phi(i)}\|_2^2.
\end{equation*}
Even the unweighted ($\forall i,\ell_i = 1$) version of this problem is APX hard, but many constant factor approximation algorithms were obtained. \citet{kanungo2002local} showed a local-search algorithm achieving an approximation ratio $9+\varepsilon$,\footnote{The algorithm of \citet{kanungo2002local} was originally designed for the unweighted setting, but it works naturally in the weighted setting if we use the algorithm by \citet{feldman2007ptas} when computing the  \emph{$(k,\varepsilon)$-approximate centroid set} on which local search is performed.} which was improved by \cite{ahmadian2017better} in the unweighted setting to an approximation ratio $6.357$.
\section{Single-factor Orthogonality}
\label{sec:single}
In the single-factor orthogonality setting, we impose the orthogonality constraint only on one of the factors $A$ or $W$. For concreteness, let us assume that the rows of $W$ are required to be orthogonal. 
Since the rows of $W$ are also non-negative, they must have disjoint supports, 
or equivalently, each column of $W$ has at most one non-zero entry. This particular structure relates our problem closely to the weighted $k$-means problem, and it's not hard to apply the approximation algorithms for weighted $k$-means to our single-factor orthogonality setting. %
Specifically, assuming there is a poly-time $r$-approximation algorithm for weighted $k$-means, we show a poly-time algorithm for the single-factor orthogonality setting with approximation factor $2r$ (Theorem \ref{thm:single}). %

To see why $k$-means plays an important role in our problem, recall that the non-negativity and orthogonality constraints on $W$ simplify each column $\vw_i$ of $W$ to the form $\theta_i\ve_{\phi(i)}$, where $\theta_i$ is a non-negative real number, $\phi$ maps $\{1,\ldots,n\}$ to $\{1,\ldots,k\}$, and $\ve_{\phi(i)}\in\bb R^k$ is the unit vector with its $\phi(i)$-th coordinate being one. This means that the $i$-th column of $AW$ is exactly $\theta_i$ times the $\phi(i)$-th column of $A$. If we think of the $k$ columns of $A$ as $k$ centroids, and $\phi$ as the assignment mapping that maps every column of $M$ to its closest centroid, (unweighted) $k$-means is exactly our problem with the additional constraint that $\theta_i = 1$ for all $i$.

With the freedom of choosing $\theta_i$, it's more convenient to solve our problem by \emph{weighted} $k$-means. 
Assume
without loss of generality that 
every column of $A$ in the optimal solution is 
the zero vector or has unit length
as we can always scale them back using $\theta_i$.
We normalize the columns of $M$ and weight each column proportional to its initial squared $L_2$ norm.
After that, always setting $\theta_i = 1$ only increases the approximation ratio by a factor of 2 as we show in the following lemma proved in Appendix \ref{sec:proof-unit} (think of $\vx$ as a column of the optimal $A$ and $\vy$ as a column of $M$):
\begin{fact}
\label{fact:unit}
Let $\vx\in\bb R^m_{\geq 0}$ be a unit vector or the zero vector. For any non-negative vector $\vy\in\bb R^m_{\geq 0}$ and any $\theta\geq 0$, we have $\|\vy-\theta\vx\|_2^2\geq\frac 12 \|\vy\|_2^2\cdot \|\bar \vy - \vx\|_2^2$, where $\bar \vy = \left\{\begin{array}{ll}\frac{\vy}{\|\vy\|_2}, & \vy\neq\vzero\\ \vzero,&\vy = \vzero\end{array}\right.$.
\end{fact}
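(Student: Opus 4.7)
The plan is to dispose of the two degenerate cases (either $\vy = \vzero$ or $\vx = \vzero$) by direct substitution, and then reduce the main case to a one-variable quadratic inequality in $\theta$ that can be checked by completing the square. The only place the non-negativity of $\vx$ and $\vy$ will actually be used is to ensure that their inner product is non-negative, so I want to pull that out into a single parameter early.

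First I would handle the trivial cases. If $\vy = \vzero$, then the right-hand side is zero by definition of $\bar\vy$, so the inequality holds. If $\vx = \vzero$, then the left-hand side equals $\|\vy\|_2^2$ and the right-hand side equals $\tfrac{1}{2}\|\vy\|_2^2\cdot\|\bar\vy\|_2^2 \leq \tfrac{1}{2}\|\vy\|_2^2$, again giving the bound.

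In the main case, $\vx$ is a unit vector, $\vy \neq \vzero$, and I would set $t \myeq \|\vy\|_2 > 0$ and $c \myeq \langle \vx, \bar\vy\rangle$. Because both $\vx$ and $\bar\vy$ have non-negative entries, $c \geq 0$, and by Cauchy-Schwarz $c \leq 1$, so $c \in [0,1]$. Expanding the two sides gives
\begin{equation*}
\|\vy - \theta\vx\|_2^2 = t^2 - 2t\theta c + \theta^2, \qquad \|\bar\vy - \vx\|_2^2 = 2(1 - c),
\end{equation*}
so the claim reduces to $t^2 - 2t\theta c + \theta^2 \geq t^2(1-c)$. Completing the square rewrites this as $(\theta - tc)^2 + t^2 c(1 - c) \geq 0$, which holds for every real $\theta$ because $c \in [0,1]$ forces $c(1-c) \geq 0$. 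Note that $\theta \geq 0$ is not actually needed in this step, only $c \geq 0$ is.

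The main (and only) obstacle to watch for is exactly that point: without the non-negativity assumption, $c$ could be negative and the term $t^2 c(1-c)$ would flip sign, breaking the completion of the square. So I would make sure to state the inequalities $0 \leq c \leq 1$ explicitly before completing the square, citing non-negativity for the lower bound and Cauchy-Schwarz (together with $\|\vx\|_2 = \|\bar\vy\|_2 = 1$) for the upper bound.
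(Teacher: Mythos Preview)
Your proof is correct. It differs from the paper's argument in presentation: the paper first minimizes the left side over $\theta$ geometrically, observing that $\|\vy-\theta\vx\|_2$ is at least the distance from $\vy$ to the line through $\vx$, namely $\|\vy\|_2\sin\alpha$ where $\alpha=\angle(\vx,\vy)$; it then reduces the claim to the trigonometric inequality $\sin^2\alpha\ge 1-\cos\alpha$. You instead keep $\theta$ free and complete the square, arriving at $(\theta-tc)^2+t^2c(1-c)\ge 0$. With $c=\cos\alpha$ these are the same inequality ($c(1-c)\ge 0$ is exactly $\cos\alpha\ge\cos^2\alpha$), so the two routes are equivalent at the core; your algebraic version avoids invoking the point-to-line distance fact and makes explicit that the constraint $\theta\ge 0$ is unnecessary, while the paper's version carries a bit more geometric intuition.
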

Based on this intuition, we obtain the following algorithm.
Let $\vm_1,\vm_2,\ldots,\vm_n\in\bb R^m_{\geq 0}$ be the columns of $M$, and let $\bar\vm_i$ be the normalized version of $\vm_i$: 
\begin{equation*}
\bar\vm_i := \left\{\begin{array}{ll}\frac{\vm_i}{\|\vm_i\|_2},& \textup{if }\vm_i\neq \vzero\\ \vzero,& \textup{if }\vm_i = \vzero\end{array}\right..
\end{equation*}
Let $\ell_i:=\|\vm_i\|_2^2$ be the \emph{weight} of point $\bar \vm_i\in\bb R^m_{\geq 0}$. We first compute an $r$-approximate solution to the following weighted $k$-means problem:
\begin{equation}
\label{eq:k-means}
    \minimize\limits_{\vc_1,\ldots,\vc_k;\phi} ~  \sum_{i=1}^n\ell_i\|\bar \vm_i-\vc_{\phi(i)}\|_2^2.
\end{equation}
We can assume WLOG that all of the centroids $\vc_1,\ldots,\vc_k$ have non-negative coordinates since increasing the negative coordinates to zero never increases the weighted $k$-means objective. Then we simply output $A = [\vc_1,\ldots,\vc_k]$ and $W = [\theta_1\ve_{\phi(1)},\ldots, \theta_n\ve_{\phi(n)}]$, where
\begin{equation*}
    \theta_i \! = \! \left\{\begin{array}{ll}\!\!\!\frac{\langle \vm_i,\vc_{\phi(i)}\rangle}{\|\vc_{\phi(i)}\|_2^2}, & \textup{\!\!\!\!\! if }\vc_{\phi(i)}\neq \vzero\\ \!\!\!0,& \textup{\!\!\!\!\! if }\vc_{\phi(i)} = \vzero\end{array}\right.\!\!\!\in\!\argmin_\theta \|\vm_i - \theta\vc_{\phi(i)}\|_2^2.
\end{equation*}

We show the approximation guarantee in the following theorem proved in Appendix \ref{sec:proof-single}.
\begin{theorem}
\label{thm:single}
The algorithm above computes
a $2r$-approximate solution $A$ and $W$ in the single-factor orthogonality setting in time $O(\tkmeans + mn)$, where $\tkmeans$ is the time needed by the weighted $k$-means subroutine.
\end{theorem}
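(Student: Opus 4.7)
The plan is to sandwich the approximation inequality around the optimal weighted $k$-means objective $\opt_{k\text{-means}}$ on the instance $(\bar\vm_i,\ell_i)_{i=1}^n$. Feasibility of the output $(A, W)$ is immediate: each column of $W$ is $\theta_i\ve_{\phi(i)}$ and so has support of size at most $1$, forcing the rows of $W$ to have disjoint supports and hence be orthogonal; non-negativity of $A$ follows from the noted observation that zeroing out negative coordinates of the centroids only decreases the $k$-means objective, and non-negativity of $\theta_i$ follows from $\langle\vm_i,\vc_{\phi(i)}\rangle\geq 0$.

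The upper bound on $\|M-AW\|_F^2$ is the easy direction. For each $i$, since $\theta_i$ minimizes $\|\vm_i - \theta\vc_{\phi(i)}\|_2^2$ over $\theta\geq 0$, plugging in the suboptimal choice $\theta=\|\vm_i\|_2$ yields
\[
\|\vm_i - \theta_i\vc_{\phi(i)}\|_2^2 \;\leq\; \|\vm_i - \|\vm_i\|_2\,\vc_{\phi(i)}\|_2^2 \;=\; \ell_i\,\|\bar\vm_i - \vc_{\phi(i)}\|_2^2.
\]
Summing over $i$ shows that $\|M-AW\|_F^2$ is bounded by the weighted $k$-means objective attained by $(\vc_1,\ldots,\vc_k,\phi)$, which is at most $r\cdot\opt_{k\text{-means}}$ by the subroutine's guarantee.

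The lower bound on $\opt$ is where Fact~\ref{fact:unit} does the work. Fix any optimal ONMF solution $(A^*,W^*)$. By absorbing column norms of $A^*$ into the corresponding rows of $W^*$, I would WLOG assume each column $\va^*_j$ is either a unit vector or zero; this preserves feasibility. Non-negativity plus orthogonality of the rows of $W^*$ then force each column of $W^*$ to have the form $\theta^*_i\ve_{\phi^*(i)}$ for some $\theta^*_i\geq 0$ and some map $\phi^*$. Applying Fact~\ref{fact:unit} term-by-term with $\vy=\vm_i$, $\vx=\va^*_{\phi^*(i)}$, $\theta=\theta^*_i$ and summing gives
\[
\opt \;=\; \sum_{i=1}^n \|\vm_i - \theta^*_i\va^*_{\phi^*(i)}\|_2^2 \;\geq\; \tfrac{1}{2}\sum_{i=1}^n \ell_i\|\bar\vm_i - \va^*_{\phi^*(i)}\|_2^2 \;\geq\; \tfrac{1}{2}\opt_{k\text{-means}},
\]
where the last step uses that $(\va^*_1,\ldots,\va^*_k,\phi^*)$ is a feasible weighted $k$-means solution. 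Chaining yields $\|M-AW\|_F^2 \leq r\cdot\opt_{k\text{-means}} \leq 2r\cdot\opt$, and the runtime is $\tkmeans$ plus $O(mn)$ for normalization, the weights $\ell_i$, and the scalar fits $\theta_i$.

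The mild asymmetry at the heart of the argument is that on the algorithm side the freedom to choose $\theta_i$ produces a tight one-sided bound for free, while on the optimum side one must pay a factor of $2$ by invoking Fact~\ref{fact:unit}; this is exactly why the final ratio is $2r$ rather than $r$. I do not anticipate any serious obstacle: the only subtlety is the WLOG unit-or-zero normalization of the columns of $A^*$, which is valid because any positive rescaling of a column of $A^*$ can be absorbed into the corresponding row of $W^*$ without breaking non-negativity or row-orthogonality of $W^*$.
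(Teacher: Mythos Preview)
Your proposal is correct and follows essentially the same approach as the paper: both arguments use Fact~\ref{fact:unit} on the optimal solution to lose a factor of $2$ in passing to the weighted $k$-means instance, then use the $r$-approximation guarantee, and finally use the optimality of $\theta_i$ (against the suboptimal choice $\theta=\|\vm_i\|_2$) to pass back. The only cosmetic difference is that you phrase the argument as two bounds sandwiching $\opt_{k\text{-means}}$, whereas the paper writes a single chain of inequalities from $\|M-A^\opt W^\opt\|_F^2$ down to $\frac{1}{2r}\|M-AW\|_F^2$; the underlying steps are identical.
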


\section{Double-factor Orthogonality}
\label{sec:double}
Now we consider the double-factor orthogonality setting, where we require $A$ to have orthogonal columns \emph{and} $W$ to have orthogonal rows, 
and show a poly-time constant factor approximation algorithm in this setting.

We first state some basic facts that will be used in the discussion of our algorithms.
\paragraph{Useful Inequalities}
The following \emph{doubled triangle inequality} for the squared $L_2$ distance between vectors $\vx$ and $\vy$ is useful when we analyze the approximation ratio of our algorithm:
\begin{fact}
\label{fact:doubled_triangle}
$\|\vx - \vy\|_2^2\leq 2\|\vx\|_2^2 + 2\|\vy\|_2^2$.
\end{fact}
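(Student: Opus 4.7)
The plan is to prove this by expanding the squared norm and then controlling the cross term, either via the AM-GM (equivalently, nonnegativity of $\|\vx+\vy\|_2^2$) or, essentially equivalently, via the parallelogram identity. Both routes are one or two lines, so I would pick whichever reads more cleanly; I lean toward the parallelogram identity because it makes the constant $2$ on the right-hand side appear for free.

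First I would write
\begin{equation*}
\|\vx - \vy\|_2^2 = \|\vx\|_2^2 - 2\langle \vx,\vy\rangle + \|\vy\|_2^2,
\qquad
\|\vx + \vy\|_2^2 = \|\vx\|_2^2 + 2\langle \vx,\vy\rangle + \|\vy\|_2^2,
\end{equation*}
and add them to obtain the parallelogram identity $\|\vx-\vy\|_2^2 + \|\vx+\vy\|_2^2 = 2\|\vx\|_2^2 + 2\|\vy\|_2^2$. Since $\|\vx+\vy\|_2^2 \geq 0$, rearranging gives the claimed inequality.

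If I preferred the direct route instead, I would observe from the first expansion that it suffices to show $-2\langle \vx,\vy\rangle \leq \|\vx\|_2^2 + \|\vy\|_2^2$, which follows from $0 \leq \|\vx+\vy\|_2^2 = \|\vx\|_2^2 + 2\langle \vx,\vy\rangle + \|\vy\|_2^2$, or alternatively from Cauchy--Schwarz combined with AM-GM: $|\langle \vx,\vy\rangle| \leq \|\vx\|_2\|\vy\|_2 \leq \tfrac{1}{2}(\|\vx\|_2^2 + \|\vy\|_2^2)$.

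There is no real obstacle here; the only thing to double-check is that the argument makes no use of non-negativity of the coordinates of $\vx,\vy$, so that the fact applies as stated for arbitrary real vectors and therefore is available later in the double-factor orthogonality analysis whenever the algorithm needs to compare squared distances via intermediate points.
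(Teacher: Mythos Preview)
Your argument is correct; the paper itself states Fact~\ref{fact:doubled_triangle} without proof, treating it as a standard inequality, so there is nothing to compare against. Your parallelogram-identity derivation (or the equivalent cross-term bound) is exactly the kind of one-line justification one would expect here.
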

When both $\vx$ and $\vy$ have non-negative coordinates, we have the following stronger fact:
\begin{fact}
\label{fact:non-negative_triangle}
If both $\vx$ and $\vy$ have non-negative coordinates, then $\|\vx-\vy\|_2^2\leq \|\vx\|_2^2 + \|\vy\|_2^2$.
\end{fact}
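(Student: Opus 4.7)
The plan is to reduce the inequality to a statement about the inner product $\langle \vx, \vy\rangle$ and then invoke coordinate-wise non-negativity. First I would expand the squared distance using the polarization identity
\begin{equation*}
\|\vx - \vy\|_2^2 = \|\vx\|_2^2 + \|\vy\|_2^2 - 2\langle \vx, \vy\rangle.
\end{equation*}
The desired bound $\|\vx - \vy\|_2^2 \le \|\vx\|_2^2 + \|\vy\|_2^2$ is therefore equivalent to $\langle \vx, \vy\rangle \ge 0$. Since $\vx$ and $\vy$ have non-negative coordinates, the inner product $\langle \vx, \vy\rangle = \sum_i x_i y_i$ is a sum of products of non-negative real numbers and hence is itself non-negative, which closes the argument.

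There is no genuine obstacle here; the only thing worth emphasizing is why this bound is strictly better than Fact \ref{fact:doubled_triangle} under the non-negativity hypothesis. In the general case the cross term $-2\langle \vx, \vy\rangle$ can only be controlled in absolute value (e.g.\ by Cauchy--Schwarz combined with AM--GM), which produces the factor of $2$ in $2\|\vx\|_2^2 + 2\|\vy\|_2^2$. Non-negativity, however, pins down the \emph{sign} of the cross term rather than just its magnitude, so that term can be discarded outright instead of absorbed into the squared norms, yielding the sharper constant $1$.
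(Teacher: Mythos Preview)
Your proof is correct and is exactly the natural argument: expand $\|\vx-\vy\|_2^2 = \|\vx\|_2^2 + \|\vy\|_2^2 - 2\langle \vx,\vy\rangle$ and drop the non-negative cross term. The paper states this as a basic fact without proof, so there is nothing further to compare.
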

\paragraph{Center of Mass}
Given $n$ points $\vx_1,\ldots,\vx_n\in\bb R^m$ and their weights $\ell_1,\ldots,\ell_n\in\bb R_{\geq 0}$, the point $\vy\in\bb R^m$ minimizing the weighted sum of the squared $L_2$ distances $\sum_{i=1}^n\ell_i\|\vx_i-\vy\|_2^2$ is the center of mass: $\vy = \left(\sum_{i=1}^n\ell_i\vx_i\right) / \left(\sum_{i=1}^n\ell_i\right)$. Moreover, the weighted sum can be decomposed using the following identity (see, for example, Lemma 2.1 in \citep{kanungo2002local}):
\begin{fact}
\label{fact:center}
Assume $\ell_1,\ldots,\ell_n\geq 0$ and $\vy=\left(\sum_{i=1}^n\ell_i\vx_i\right) / \left(\sum_{i=1}^n\ell_i\right)$. Then for any vector $\vb$, we have 
\begin{equation*}
\sum_{i=1}^n\ell_i\|\vx_i-\vb\|_2^2 = \sum_{i=1}^n\ell_i\|\vx_i-\vy\|_2^2 + \sum_{i=1}^n\ell_i\|\vy-\vb\|_2^2.
\end{equation*}
\end{fact}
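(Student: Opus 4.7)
The plan is to reduce the identity to a straightforward bias--variance style decomposition by expanding around the center of mass $\vy$. I would introduce the pivot by writing $\vx_i - \vb = (\vx_i - \vy) + (\vy - \vb)$ and applying the expansion $\|\vu + \vv\|_2^2 = \|\vu\|_2^2 + 2\langle \vu, \vv\rangle + \|\vv\|_2^2$ coordinate by coordinate to obtain
\begin{equation*}
\|\vx_i - \vb\|_2^2 = \|\vx_i - \vy\|_2^2 + 2\langle \vx_i - \vy,\, \vy - \vb\rangle + \|\vy - \vb\|_2^2.
\end{equation*}
Multiplying by $\ell_i$ and summing over $i \in \{1,\ldots,n\}$, the first and third summands match the two terms on the right-hand side of the desired identity exactly, so the whole claim reduces to showing that the weighted cross term vanishes.

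The key step is verifying $\sum_{i=1}^n \ell_i \langle \vx_i - \vy,\, \vy - \vb\rangle = 0$. Since $\vy - \vb$ does not depend on $i$, I would pull it out of the inner product using bilinearity, reducing the identity to $\langle \sum_{i=1}^n \ell_i (\vx_i - \vy),\, \vy - \vb\rangle = 0$. It thus suffices to show $\sum_{i=1}^n \ell_i (\vx_i - \vy) = \vzero$, which follows by splitting the sum and invoking the definition $\vy = (\sum_i \ell_i \vx_i)/(\sum_i \ell_i)$, i.e.\ $\sum_i \ell_i \vx_i = (\sum_i \ell_i)\vy$. This is where the specific choice of center of mass is used; for any other choice of $\vy$ the cross term would not cancel.

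Since every step is a routine algebraic identity, there is no substantive obstacle. The only wrinkle is the degenerate case $\sum_i \ell_i = 0$: the definition of $\vy$ is then ambiguous (division by zero), but both sides of the claimed identity are zero regardless of how $\vy$ is chosen, so the statement holds vacuously once one fixes any convention for $\vy$. I would dispatch this case in a single sentence at the start and then carry out the expansion above under the assumption $\sum_i \ell_i > 0$.
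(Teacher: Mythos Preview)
Your proposal is correct and is the standard bias--variance expansion that underlies this identity. The paper itself does not give a proof of Fact~\ref{fact:center} but simply cites Lemma~2.1 of \citet{kanungo2002local}, so your argument is exactly what is needed to fill in the details.
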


\subsection{Intuition}
We describe the intuition that leads us to the algorithm.
As our first step, we solve the weighted $k$-means problem as we did in the single-factor orthogonality setting,
but we need to additionally ensure that the columns of $A$ are orthogonal.
By the \emph{doubled triangle inequality} (Fact \ref{fact:doubled_triangle}) and the property of the center of mass (Fact \ref{fact:center}), we can move the $n$ points to their centroids without affecting the approximation ratio too much. Now there are only $k$ distinct points, and it's more convenient to treat these points as vectors, so that we can talk about the angles between them. Our goal is to find $k$ \emph{orthogonal} centroids that approximate these $k$ vectors. The key challenge is to find the assignment mapping: which vectors are mapped to the same centroid, and once we know the assignment mapping, we can find the best centroids by optimizing each coordinate separately (see (\ref{eq:final_optimization})). 
Intuitively, the assignment mapping should respect the angles between the vectors: if a pair of vectors form a ``small'' angle, they should be mapped to the same centroid, and if they form a ``large'' angle close to $\pi/2$, they should be mapped to different centroids. However, two vectors both forming ``small'' angles with the third may themselves form a relatively ``large'' angle. In order to solve the lack of transitivity, we need to eliminate angles that are neither very ``small'' nor very ``large''. We make the observation that if the angle between two vectors is in the range $[\pi/6, \pi/3]$, they can't be simultaneously close to a set of orthonormal vectors, and thus they can't have low cost in the optimal solution, so we can safely ``ignore'' them by decreasing their weights by the same amount. This \emph{weight reduction} procedure eventually makes the angle between any two vectors lie in the range $[0,\pi/6)\cup(\pi/3, \pi/2]$. If two vectors both have angles less than $\pi/6$ with the third, they themselves cannot form an angle larger than $\pi/3$, so now we have the desired transitivity. Our Lemma \ref{lm:prune} shows that the assignment mapping computed this way is comparable to the optimal one.

\subsection{Algorithm}
\label{sec:algorithm}
Our algorithm consists of three major steps. The first step is to apply the weighted $k$-means algorithm as we did in the single-factor orthogonality setting, and two additional steps are needed to make sure the solution has both factors being orthogonal.
\subsubsection*{Step 1: Weighted \texorpdfstring{$k$}{k}-Means}
Let $\vm_1,\vm_2,\ldots,\vm_n\in\bb R^m_{\geq 0}$ be the columns of $M$ and define $\bar\vm_i$ and $\ell_i$ the same way as in Section \ref{sec:single}. Compute an $r$-approximate solution $\vc_1,\ldots,\vc_k, \phi$ to the weighted $k$-means problem (\ref{eq:k-means}). Define the weight $q_j$ of a centroid $\vc_j$ to be the total weight of the points assigned to it: $q_j:=\sum_{i\in\phi^{-1}(j)}\ell_i$. By Fact \ref{fact:center}, we can always assume WLOG that whenever $q_j>0$, it holds that $\vc_j = \left(\sum_{i\in\phi^{-1}(j)}\ell_i\bar\vm_i\right)/q_j$. Under this assumption, whenever $q_j>0$, we have $\|\vc_j\|_2\leq 1$. We also have the following easy fact:
\begin{fact}
\label{fact:nonzero}
If $q_j>0$, then $\vc_j\neq \vzero$.
\end{fact}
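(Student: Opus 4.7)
The plan is very short because this is essentially a positivity argument for a non-negative combination of vectors. Assume $q_j > 0$. By definition $q_j = \sum_{i \in \phi^{-1}(j)} \ell_i$ with each $\ell_i = \|\vm_i\|_2^2 \geq 0$, so there must exist some index $i_0 \in \phi^{-1}(j)$ with $\ell_{i_0} > 0$, which in turn means $\vm_{i_0} \neq \vzero$ and hence $\bar\vm_{i_0} = \vm_{i_0}/\|\vm_{i_0}\|_2$ is a unit vector with non-negative coordinates.

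Under the assumption stated just before the fact, $\vc_j = \left(\sum_{i \in \phi^{-1}(j)} \ell_i \bar\vm_i\right)/q_j$. Every summand $\ell_i \bar\vm_i$ is coordinate-wise non-negative, since $\ell_i \geq 0$ and each $\bar\vm_i$ lives in $\bb R^m_{\geq 0}$ by construction. Pick any coordinate $t \in \{1, \ldots, m\}$ where $\bar\vm_{i_0}$ has a strictly positive entry (such a $t$ exists because $\bar\vm_{i_0}$ is a unit vector). Then the $t$-th coordinate of $\sum_{i \in \phi^{-1}(j)} \ell_i \bar\vm_i$ is at least $\ell_{i_0}(\bar\vm_{i_0})_t > 0$, so the $t$-th coordinate of $\vc_j$ is strictly positive. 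In particular $\vc_j \neq \vzero$.

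There is no real obstacle here; the only subtle point is to remember that the fact is being stated under the WLOG assumption from the preceding sentence, so that we may use the explicit center-of-mass formula for $\vc_j$ rather than reasoning about an arbitrary approximate $k$-means centroid. With that in hand, non-negativity of all the relevant quantities and the existence of at least one positive contributor give the conclusion immediately.
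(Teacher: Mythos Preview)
Your proof is correct and takes essentially the same approach as the paper: both use the center-of-mass formula for $\vc_j$ together with the non-negativity of the $\ell_i\bar\vm_i$'s. The only cosmetic difference is that the paper argues by contradiction (assuming $\vc_j=\vzero$ forces every $\ell_i=0$, hence $q_j=0$), whereas you argue directly by exhibiting a strictly positive coordinate of $\vc_j$.
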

\begin{proof}
Assume for the sake of contradiction that $\vc_j = \vzero$. According to our assumption, we have $\vzero = \vc_j = \left(\sum_{i\in\phi^{-1}(j)}\ell_i\bar\vm_i\right)/q_j$, so for all $i\in\phi^{-1}(j)$, $\ell_i\bar\vm_i = \vzero$. If $\bar\vm_i\neq\vzero$, we know $\ell_i = 0$; otherwise, we know $\vm_i = \vzero$ and thus, again, $\ell_i = \|\vm_i\|_2^2 = 0$. Now we have our desired contradiction: $q_j = \sum_{i\in\phi^{-1}(j)}\ell_i = 0$.
\end{proof}
\subsubsection*{Step 2: Weight Reduction}
Recall that the weight $q_j$ of a centroid $\vc_j$ was defined to be the total weight of the points assigned to it.
The second step of the algorithm is to \emph{reduce} the weights $q_1,\ldots,q_k$ to $q_1',\ldots,q_k'$. To start, all $q_j'$ are initialized to be $q_j$. Our algorithm iterates over all pairs $(j_1,j_2)$ satisfying $1\leq j_1<j_2\leq k$. If $q_{j_1}'>0,q_{j_2}'>0$ and $\angle(\vc_{j_1},\vc_{j_2})\in [\pi/6,\pi/3]$, our algorithm decreases both $q_{j_1}',q_{j_2}'$ by the minimum of the two (thus sending at least one of them to 0). Recall Fact \ref{fact:nonzero} that $\vc_{j_1}$ and $\vc_{j_2}$ are both non-zero, so the angle between them is well-defined. 

\subsubsection*{Step 3: Finalize the Solution}
Now we are most interested in centroids $\vc_j$ with positive weights ($q_j'>0$) after the weight reduction step. For any two centroids $\vc_{j_1},\vc_{j_2}$ with positive weights, we know $\angle(\vc_{j_1},\vc_{j_2})\in[0,\pi/6)\cup(\pi/3,\pi/2]$. Since the angles between vectors satisfy the triangle inequality, we can group these centroids so that 
$\angle(\vc_{j_1},\vc_{j_2})\in[0,\pi/6)$ if $j_1,j_2$ belong to the same group, and
$\angle(\vc_{j_1},\vc_{j_2})\in(\pi/3,\pi/2]$ if $j_1,j_2$ belong to different groups.
Suppose $\vc_j$ belongs to group $\sigma(j)\in\{1,\ldots,k\}$.

We claim that we can find an \emph{optimal} solution to the following optimization problem in poly-time:
\begin{align}
    \minimize_{\va_1,\ldots,\va_k} ~ & \sum_{j:q'_j>0}q_j'\|\vc_j-\va_{\sigma(j)}\|_2^2,\nonumber \\
    \textup{s.t.\ } & \va_1,\ldots,\va_k\in\bb R^m_{\geq 0},\nonumber \\ 
    & \forall 1\leq s <t \leq k, \va_s^\top \va_t=0.\label{eq:final_optimization}
\end{align}
To solve the above optimization problem, we decompose it \emph{coordinate-wise}. Specifically, the constraints on $\va_1,\ldots,\va_k$ can be translated to that for every $h\in\{1,\ldots,m\}$, the $h$-th coordinates $a_{1,h},\ldots,a_{k,h}$ are all non-negative and contain at most one positive value. The objective can also be decomposed coordinate-wise:
\begin{align*}
    \sum_{j:q'_j>0}q_j'\|\vc_j-\va_{\sigma(j)}\|_2^2 & = \sum_{h=1}^m\sum_{j:q'_j>0}q_j'(c_{j,h}-a_{\sigma(j),h})^2 \\
    & \myeq \sum_{h=1}^m O_h.
\end{align*}
If we define $q^*_s$ as the total weight of the $s$-th group: $q^*_s = \sum_{j\in\sigma^{-1}(s)}q'_j$, and when $q^*_s> 0$ define $\mu_{s,h}$ as the weighted average of the $h$-th coordinate of the centroids in the $s$-th group: $\mu_{s,h} = (q^*_s)^{-1}\sum_{j\in\sigma^{-1}(s)}q_j'c_{j,h}$, we can further decompose the objective above using Fact \ref{fact:center} as
\begin{equation*}
O_h = \sum_{j:q'_j>0}q_j'(c_{j,h}-\mu_{\sigma(j),h})^2 + \sum_{s:q^*_s > 0} q^*_s (\mu_{s,h} - a_{s,h})^2.
\end{equation*}
The first term does not depend on $\va_1,\ldots,\va_k$, and the second term is minimized when $a_{s,h} = \mu_{s,h}$ for $s = \argmax_sq_s^*\mu_{s,h}^2$ and $a_{s,h} = 0$ for other $s$.
We have thus computed the optimal solution to \eqref{eq:final_optimization}.
Since $\|\vc_j\|_2\leq 1$ whenever $q_j'>0$, it is straightforward to check that $\|\va_s\|_2\leq 1$ for $s = 1,\ldots,k$.

We output $A = [\va_1,\ldots,\va_k]$ and $W = [\theta_1\ve_{\sigma(\phi(1))}, \ldots, \theta_n\ve_{\sigma(\phi(n))}]$ as the final solution, where
\begin{align*}
    \theta_i & = \left\{\begin{array}{ll}\frac{\langle \vm_i,\va_{\sigma(\phi(i))}\rangle}{\|\va_{\sigma(\phi(i))}\|_2^2}, & \textup{if }\va_{\sigma(\phi(i))}\neq \vzero\\ 0,& \textup{if }\va_{\sigma(\phi(i))} = \vzero\end{array}\right.\\
    & \in\argmin_\theta \|\vm_i - \theta\va_{\sigma(\phi(i))}\|_2^2.
\end{align*}
Note that $\sigma(j)$ was defined only for $j$ with $q_j'>0$, but here we extend its definition to all $j\in\{1,\ldots,k\}$ by setting the remaining values arbitrarily.

\subsection{Analysis}
We show the following two theorems on the approximation guarantee of our algorithm in the double-factor orthogonality setting. Theorem \ref{thm:main} applies to general inner dimensions $k$, while Theorem \ref{thm:k>=n} gives improved approximation factors when $k$ is large, which is the case when we apply our ONMF algorithm to correlation clustering. Recall that we used an $r$-approximation algorithm for weighted $k$-means as a subroutine, and we assume that its running time is $\tkmeans$. 
\begin{theorem}
\label{thm:main}
The algorithm in Section \ref{sec:algorithm} computes a $\left(2r + \frac{8r + 8}{\sin^2(\pi/12)}\right)$-approximate solution $A$ and $W$ in the double-factor orthogonality setting in time $O(\tkmeans + mn + mk^2)$.
\end{theorem}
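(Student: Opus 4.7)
The plan is to decompose the algorithm's error into a weighted $k$-means piece and a ``centroid rounding'' piece, bounding the first by the $r$-approximation guarantee of Step 1 and the second via the geometric effect of the weight-reduction step combined with the \emph{exact} optimality of Step 3.

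Normalize the ONMF optimum $(A^*, W^*)$ so each column $\va^*_s$ is unit or zero and each row of $W^*$ is $\theta^*_i\ve_{\phi^*(i)}^\top$. Because each $\theta_i$ in the algorithm is chosen optimally and $\|\va_{\sigma(\phi(i))}\|_2 \leq 1$, competing with $\theta = \|\vm_i\|_2$ gives $\|\vm_i - \theta_i\va_{\sigma(\phi(i))}\|_2^2 \leq \ell_i\|\bar\vm_i - \va_{\sigma(\phi(i))}\|_2^2$. Writing $\bar\vm_i - \va_{\sigma(\phi(i))} = (\bar\vm_i - \vc_{\phi(i)}) + (\vc_{\phi(i)} - \va_{\sigma(\phi(i))})$ and using the WLOG identity $\vc_j = (\sum_{\phi(i)=j}\ell_i\bar\vm_i)/q_j$ (Fact~\ref{fact:center}) kills the cross terms, yielding
\begin{equation*}
    \|M - AW\|_F^2 \leq \underbrace{\sum_i \ell_i\|\bar\vm_i - \vc_{\phi(i)}\|_2^2}_{=:\,T} + \underbrace{\sum_j q_j\|\vc_j - \va_{\sigma(j)}\|_2^2}_{=:\,R}.
\end{equation*}
To bound $T$, note that $(\va^*_s, \phi^*)$ is feasible for \eqref{eq:k-means} and by Fact~\ref{fact:unit} applied pointwise has cost at most $2\opt$; so the $r$-approximation of Step 1 gives $T \leq 2r\cdot\opt$, which accounts for the $2r$ term in the claimed ratio.

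The heart of the proof is bounding $R$. Split $R = R_{\text{drop}} + R_{\text{keep}}$ along $q_j = (q_j - q'_j) + q'_j$. Since $\vc_j, \va_{\sigma(j)}$ are non-negative with norm at most $1$, Fact~\ref{fact:non-negative_triangle} gives $R_{\text{drop}} \leq 2\sum_j(q_j - q'_j)$. Each weight-reduction event subtracts an equal $\delta$ from both $q'_{j_1}, q'_{j_2}$, triggered only when $\angle(\vc_{j_1},\vc_{j_2}) \in [\pi/6,\pi/3]$; the key geometric claim (a triangle inequality on the sphere) is that whichever of $\va^*_s,\va^*_t$ the points assigned to $\vc_{j_1},\vc_{j_2}$ are nearest to in the ONMF optimum, at least one of $\angle(\vc_{j_1},\va^*_s), \angle(\vc_{j_2},\va^*_t)$ is $\geq \pi/12$ (both hitting the same orthonormal vector forces the summed angle $\geq \pi/6$; hitting different ones forces the summed deviation from orthogonality $\geq \pi/2 - \pi/3 = \pi/6$). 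Charging this through Fact~\ref{fact:unit} and the $k$-means bound yields $\sum_j(q_j - q'_j) \leq (2r+2)\opt/\sin^2(\pi/12)$, so $R_{\text{drop}} \leq (4r+4)\opt/\sin^2(\pi/12)$. For $R_{\text{keep}}$, Step~3 returns the \emph{exact} optimum of~\eqref{eq:final_optimization}, so it suffices to exhibit a feasible competitor: by Lemma~\ref{lm:prune}, the groups induced by $\sigma$ on the nonzero-weight centroids can be matched injectively to columns of $A^*$ so that each $\vc_j$ is within angle $\pi/6$ of its matched $\va^*$, and substituting these as the orthogonal targets bounds $R_{\text{keep}}$ by the same $(4r+4)\opt/\sin^2(\pi/12)$.

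Summing, $\|M - AW\|_F^2 \leq T + R \leq \bigl(2r + (8r+8)/\sin^2(\pi/12)\bigr)\opt$. The running time is $\tkmeans$ for Step~1, $O(mk^2)$ for all $\binom{k}{2}$ pairwise inner products in Step~2, and $O(mn+mk)$ for the coordinate-wise closed form of Step~3 plus producing the output. The main obstacle is the $R_{\text{drop}}$ analysis: pinning down the right geometric lower bound on how much a pair with angle in $[\pi/6,\pi/3]$ must pay against any non-negative orthonormal basis, and carefully charging each reduction event to the $k$-means partition; Lemma~\ref{lm:prune} then handles $R_{\text{keep}}$ once the grouping $\sigma$ is aligned with the optimal column partition of $A^*$.
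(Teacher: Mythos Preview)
Your overall plan matches the paper: decompose $\|M-AW\|_F^2 \leq T + R$ with $T$ the weighted $k$-means cost and $R = \sum_j q_j\|\vc_j - \va_{\sigma(j)}\|_2^2$, bound $T \leq 2r\cdot\opt$ via Fact~\ref{fact:unit}, and control $R$ through the weight-reduction geometry. The paper, however, does not split $R$ and charge each piece to $\opt$ separately; it applies Lemma~\ref{lm:prune} to all of $R$ at once to get $R \leq \tfrac{2}{\sin^2(\pi/12)}\sum_j q_j\|\vc_j - \va^\opt_{\sigma'(j)}\|_2^2$, and only then invokes the single bound $\sum_j q_j\|\vc_j - \va^\opt_{\sigma'(j)}\|_2^2 \leq (4r+4)\opt$ (from Fact~\ref{fact:doubled_triangle} combined with \eqref{eq:aopt} and \eqref{eq:c}).

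Your treatment of $R$ has two concrete gaps. First, the claim underpinning $R_{\text{keep}}$—that ``each $\vc_j$ is within angle $\pi/6$ of its matched $\va^*$''—is false: the weight-reduction step only controls angles \emph{between surviving centroids}, not their angles to the unknown optimum, so a centroid with $q'_j>0$ may well satisfy $\angle(\vc_j,\va^\opt_{\sigma'(j)})\ge\pi/6$. Lemma~\ref{lm:prune} is an inequality, not a matching statement; its proof explicitly treats such ``unmatched'' $j$ via the crude comparison $\|\vc_j-\va'_{\sigma(j)}\|_2^2\le 2$ against $\|\vc_j-\va^\opt_{\sigma'(j)}\|_2^2\ge\sin^2(\pi/6)=1/4$, yielding the factor~$8$. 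Second, your constants slip: each reduction event contributes $2\delta$ to $\sum_j(q_j-q'_j)$, so your own steps give $\sum_j(q_j-q'_j)\le(4r+4)\opt/\sin^2(\pi/12)$ (not $(2r+2)$), hence $R_{\text{drop}}\le(8r+8)\opt/\sin^2(\pi/12)$, and adding any positive $R_{\text{keep}}$ then overshoots the stated ratio. The fix is exactly what the paper does inside Lemma~\ref{lm:prune}: bound $R_{\text{drop}}$ and $R_{\text{keep}}$ against $\sum_j(q_j-q'_j)\|\vc_j-\va^\opt_{\sigma'(j)}\|_2^2$ and $\sum_j q'_j\|\vc_j-\va^\opt_{\sigma'(j)}\|_2^2$ with factors $2/\sin^2(\pi/12)$ and $8\le 2/\sin^2(\pi/12)$ respectively, add \emph{before} applying $(4r+4)\opt$, so that the weights $q_j-q'_j$ and $q'_j$ recombine to $q_j$.
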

\begin{theorem}
\label{thm:k>=n}
When $k\geq \min\{m,n\}$, there exists an algorithm that gives a $\frac{1}{\sin^2(\pi/12)}(\leq 15)$-approximate solution in the double-factor orthogonality setting in time $O(mn^2)$.
\end{theorem}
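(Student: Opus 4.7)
The plan is to exploit the slack $k \geq \min\{m,n\}$ to bypass the weighted $k$-means step of Section~\ref{sec:algorithm} entirely, so the approximation ratio is controlled solely by the Weight Reduction and Finalize steps. By the transpose symmetry of the objective --- $\|M - AW\|_F^2 = \|M^\top - W^\top A^\top\|_F^2$, with the double-orthogonality constraint invariant under swapping the two factors --- I may assume without loss of generality that $n \leq m$, so $k \geq n$. The algorithm then sets $\vc_i := \bar\vm_i$ with weight $q_i := \ell_i$ and $\phi(i) := i$ for $i \leq n$, and $\vc_j := \vzero$, $q_j := 0$ for $j > n$, which is a cost-zero exact solution to the weighted $k$-means problem~(\ref{eq:k-means}). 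I then run the Weight Reduction and Finalize steps of Section~\ref{sec:algorithm} unchanged. The running time is dominated by the $\binom{n}{2}$ pairwise angle computations in Weight Reduction, each taking $O(m)$ time, giving $O(mn^2)$.

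For analysis, let $(A^*, W^*)$ be optimal with the columns of $A^*$ WLOG of unit length or zero, cluster map $\phi^*$, and $\alpha_i := \angle(\bar\vm_i, \va^*_{\phi^*(i)})$ (set to $\pi/2$ when $\va^*_{\phi^*(i)} = \vzero$). Then $\mathrm{OPT} = \sum_i \ell_i \sin^2(\alpha_i)$, since $\theta_i^*$ is optimized per column. The crucial structural fact, proved by the triangle inequality for angles applied through $\va^*_{\phi^*(i_1)}$ and $\va^*_{\phi^*(i_2)}$, is: \emph{if $\angle(\vc_{i_1}, \vc_{i_2}) \in [\pi/6, \pi/3]$, then $\max(\alpha_{i_1}, \alpha_{i_2}) \geq \pi/12$.} Two cases: if $\phi^*(i_1) = \phi^*(i_2)$, then $\alpha_{i_1} + \alpha_{i_2} \geq \angle(\vc_{i_1}, \vc_{i_2}) \geq \pi/6$; if $\phi^*(i_1) \neq \phi^*(i_2)$, then by orthogonality of the two $\va^*$'s we have $\alpha_{i_1} + \alpha_{i_2} \geq \pi/2 - \angle(\vc_{i_1}, \vc_{i_2}) \geq \pi/6$. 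A consequence for the Finalize step is that among surviving centroids with $\alpha_i < \pi/12$, the group partition $\sigma$ agrees exactly with $\phi^*$: same-cluster columns have angle $< \pi/6$ (same group), and cross-cluster columns have angle $> \pi/3$ (different groups).

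The analysis then proceeds in two parts. First, each Weight Reduction event is charged to a ``bad'' ($\alpha_i \geq \pi/12$) column with charge equal to the reduction amount $\delta$; since the total charge on a bad $i$ cannot exceed $\ell_i$, the non-survivor and bad-survivor contributions to ALG are bounded by $\sum_{\alpha_i \geq \pi/12} \ell_i \leq \mathrm{OPT}/\sin^2(\pi/12)$. Second, the good-survivor contribution is bounded via the Finalize step's coordinate-wise optimality by comparing against the feasible alternative $\tilde\va_s := \va^*_{j^*(s)}$, where $j^*$ is the injection from good-survivor groups to OPT clusters made well-defined by the structural fact. Crucially, because $\vc_i = \bar\vm_i$ exactly, the Fact~\ref{fact:doubled_triangle} application that contributes the $2r$ term in Theorem~\ref{thm:main} is unnecessary, and the ratio collapses from $\bigl(2r + (8r+8)/\sin^2(\pi/12)\bigr)$ to $1/\sin^2(\pi/12)$.

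The main obstacle is book-keeping to land at exactly $1/\sin^2(\pi/12)$ rather than a larger constant multiple: weight-reduction events must be charged against the same OPT budget as the bad-survivor contributions in the Finalize step without double counting, and the comparison between Step~3's output and the alternative $\tilde\va_{\sigma(j)} = \va^*_{j^*(\sigma(j))}$ should rely on the stronger non-negative triangle inequality (Fact~\ref{fact:non-negative_triangle}) in place of the doubled triangle inequality (Fact~\ref{fact:doubled_triangle}) at every applicable step, so that factor-of-$2$ losses are not compounded.
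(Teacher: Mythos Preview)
Your algorithm is exactly the paper's: use transpose symmetry to assume $k\geq n$, take the trivial zero-cost $k$-means solution $\phi(i)=i$, $\vc_i=\bar\vm_i$, $q_i=\ell_i$, then run Steps~2 and~3 unchanged; the $O(mn^2)$ running time is also argued identically. The structural angle fact you isolate is the same one driving Lemma~\ref{lm:prune'}.

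Where you diverge from the paper is in the analysis decomposition, and here the sketch does not land at $1/\sin^2(\pi/12)$ for reasons that go beyond book-keeping. First, your Part~1 claim that ``non-survivor plus bad-survivor contributions to ALG are bounded by $\sum_{\alpha_i\geq\pi/12}\ell_i$'' is false in general: a \emph{good} column can be wiped out by repeatedly pairing with a heavier bad column, so good non-survivors contribute up to their full $\ell_i$ to ALG while not appearing in $\sum_{\text{bad}}\ell_i$. Your charging bounds $\sum_t\Delta_t\leq\sum_{\text{bad}}\ell_j$, but the ALG contribution you need to cover is up to $2\sum_t\Delta_t$ (both participants in each event), so you are off by a factor of two. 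Second, your Part~2 uses the Finalize optimality of \eqref{eq:final_optimization}, which is stated at the \emph{reduced} weights $q'_i$ and over \emph{all} survivors; it does not directly control the good-survivor contribution measured at the full weights $\ell_i$.

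The paper avoids both issues by splitting $q_i=q'_i+(q_i-q'_i)$ (Lemma~\ref{lm:prune'}). For the $(q_i-q'_i)$ piece it compares each reduction event on both sides simultaneously, using the stronger conclusion $\alpha_{i_1}+\alpha_{i_2}\geq\pi/6$ together with convexity and monotonicity of $\sin^2$ on $[0,\pi/2]$ to get $\sin^2\alpha_{i_1}+\sin^2\alpha_{i_2}\geq 2\sin^2(\pi/12)$ (your ``$\max\geq\pi/12$'' only yields $\geq\sin^2(\pi/12)$, losing the factor of $2$ you need). Since the LHS per event is at most $2$, this gives the exact ratio $1/\sin^2(\pi/12)$. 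For the $q'_i$ piece the paper invokes Finalize optimality at exactly the weights $q'_i$ for which it holds, builds the alternative $\va'_s$ with the ``matched'' threshold $\pi/6$, and obtains a factor $8<1/\sin^2(\pi/12)$. This $q'$ versus $q-q'$ split is what makes the constant come out exactly, and it is not equivalent to your survivor-status split.
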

We prove Theorem \ref{thm:main} based on the following lemma, which we prove in Appendix \ref{sec:proof-prune}. We defer the proof of Theorem \ref{thm:k>=n} to Appendix \ref{sec:proof-k>=n}.
\begin{lemma}
\label{lm:prune} 
Let $\vz_1,\ldots,\vz_{k_1}\in\bb R^m_{\geq 0}$ be non-negative unit vectors that are orthogonal to each other. For any $\sigma':\{1,\ldots,k\}\rightarrow\{1,\ldots,k_1\}$, we have
\begin{equation*}
\sum_{j=1}^kq_j\|\vc_j-\va_{\sigma(j)}\|_2^2\leq \frac{2}{\sin^2(\pi/12)}\sum_{j=1}^kq_j\|\vc_j-\vz_{\sigma'(j)}\|_2^2.
\end{equation*}
\end{lemma}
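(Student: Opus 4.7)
\textbf{Proof plan for Lemma \ref{lm:prune}.} The plan is to combine two ingredients: the optimality of the $\va_s$'s computed in Step~3 of the algorithm against a feasible alternative built from the reference unit vectors $\vz_t$, and a geometric lower bound on the RHS coming from the angle constraints enforced by the weight-reduction step. As an organizing decomposition I would split the LHS according to the reduced weights:
\begin{equation*}
\sum_{j=1}^k q_j\|\vc_j-\va_{\sigma(j)}\|_2^2 \;=\; \underbrace{\sum_{j=1}^k q_j'\|\vc_j-\va_{\sigma(j)}\|_2^2}_{(\mathrm{I})} \;+\; \underbrace{\sum_{j=1}^k (q_j-q_j')\|\vc_j-\va_{\sigma(j)}\|_2^2}_{(\mathrm{II})}.
\end{equation*}

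For $(\mathrm{I})$, the algorithm's $\va_s$'s exactly solve \eqref{eq:final_optimization}, so for any non-negative and pairwise-orthogonal alternative $\hat\va_1,\ldots,\hat\va_k$ we have $(\mathrm{I})\leq\sum_j q_j'\|\vc_j-\hat\va_{\sigma(j)}\|_2^2$. I would construct such $\hat\va$'s by choosing, for each group $s$ with $q_s^*>0$, a representative index $j_s\in\sigma^{-1}(s)$ (say of maximum $q_{j_s}'$) and tentatively setting $\hat\va_s=\vz_{\sigma'(j_s)}$; whenever two representatives collide on the same $\vz_t$ I zero out the one with smaller $q_s^*$, which preserves orthogonality. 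Then for each $j$ with $\sigma(j)=s$ I apply the Euclidean triangle inequality $\|\vc_j-\hat\va_s\|_2\leq \|\vc_j-\vz_{\sigma'(j)}\|_2+\|\vz_{\sigma'(j)}-\vz_{\sigma'(j_s)}\|_2$ (with an analogous bound when $\hat\va_s=\vzero$) and control the second term by the angular triangle inequality through $\vc_j$ and $\vc_{j_s}$: since $\vc_j,\vc_{j_s}$ lie in the same group their angle is $<\pi/6$, so if $\sigma'(j)\neq\sigma'(j_s)$ then $\angle(\vc_j,\vz_{\sigma'(j)})+\angle(\vc_{j_s},\vz_{\sigma'(j_s)})\geq\pi/2-\pi/6$, forcing at least one of these two reference distances to be $\Omega(\|\vc\|_2\sin(\pi/6))$. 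This lets me absorb the cross term into $\|\vc_j-\vz_{\sigma'(j)}\|_2^2+\|\vc_{j_s}-\vz_{\sigma'(j_s)}\|_2^2$.

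For $(\mathrm{II})$, I charge each decrement of weight against the RHS. By construction, every $\Delta$ units removed from $q_{j_1}'$ in the weight-reduction step are paired with an equal $\Delta$ removed from some $q_{j_2}'$ where $\angle(\vc_{j_1},\vc_{j_2})\in[\pi/6,\pi/3]$. In the case $\sigma'(j_1)=\sigma'(j_2)=t$, the angular triangle inequality gives $\angle(\vc_{j_1},\vz_t)+\angle(\vc_{j_2},\vz_t)\geq \pi/6$; in the case $\sigma'(j_1)\neq \sigma'(j_2)$ the orthogonality $\angle(\vz_{\sigma'(j_1)},\vz_{\sigma'(j_2)})=\pi/2$ and the same triangle inequality yield $\angle(\vc_{j_1},\vz_{\sigma'(j_1)})+\angle(\vc_{j_2},\vz_{\sigma'(j_2)})\geq\pi/2-\pi/3=\pi/6$. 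In either case at least one of the two angles is $\geq\pi/12$, so using $\|\vc-\vz\|_2\geq\|\vc\|_2\sin(\angle(\vc,\vz))$ I obtain a per-$\Delta$ lower bound on the RHS of order $\sin^2(\pi/12)\cdot\Delta\cdot\|\vc\|_2^2$. Combined with the uniform bound $\|\vc_j-\va_{\sigma(j)}\|_2^2\leq 2$ (from Fact~\ref{fact:non-negative_triangle} and the fact that both vectors are non-negative with norm at most $1$), this produces the $2/\sin^2(\pi/12)$ factor.

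The main obstacle is the collision-resolution in the construction of the $\hat\va$'s for part $(\mathrm{I})$. Whenever two distinct algorithm groups $s\neq s'$ have their representatives hit the same $\vz_t$ under $\sigma'$, zeroing out one of them introduces an extra term $q_s^*\|\vc_{j_s}\|_2^2$ that must itself be charged to the RHS. The key observation is that a collision forces $\angle(\vc_{j_s},\vc_{j_{s'}})>\pi/3$ (they lie in different algorithm groups), so by the angular triangle inequality at least one of $\angle(\vc_{j_s},\vz_t),\angle(\vc_{j_{s'}},\vz_t)$ is $\geq\pi/6$, contributing $\Omega(\sin^2(\pi/6))$ to the reference cost and paying for the zeroing. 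Making this charging clean across all collisions and combining it with the bound for $(\mathrm{II})$ is the technical heart of the proof, and the threshold $\pi/12$ is exactly what makes both pieces balance.
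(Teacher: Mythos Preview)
Your decomposition into $(\mathrm I)$ and $(\mathrm{II})$ and your treatment of $(\mathrm{II})$ are essentially the paper's argument. Two small corrections for $(\mathrm{II})$: first, the bound you want is $\|\vc_j-\vz_{\sigma'(j)}\|_2\geq \sin\alpha_j$ (project the \emph{unit} vector $\vz$ onto the line through $\vc_j$), not $\|\vc_j\|_2\sin\alpha_j$; the latter would introduce an uncontrolled $\|\vc_j\|_2^{-2}$ into the ratio, since $\|\vc_j\|_2$ can be arbitrarily small. Second, ``at least one angle is $\geq\pi/12$'' only gives $\sin^2\alpha_1+\sin^2\alpha_2\geq\sin^2(\pi/12)$; to recover the stated constant you need the convexity and monotonicity of $\sin^2$ on $[0,\pi/2]$, which from $\alpha_1+\alpha_2\geq\pi/6$ yields $\sin^2\alpha_1+\sin^2\alpha_2\geq 2\sin^2(\pi/12)$.

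The genuine gap is in $(\mathrm I)$. Choosing a max-$q'$ representative $j_s$ per group and then resolving collisions does not lead to a clean charging. When a group $s$ is zeroed out, or when some $j\in\sigma^{-1}(s)$ has $\sigma'(j)\neq\sigma'(j_s)$ and the ``large angle'' in your dichotomy lands on the representative's side, the LHS cost you must pay is of order $q'_j$ (summed over all such $j$, so potentially of order $q^*_s$), while the only RHS term you have identified to absorb it is $q'_{j_s}\cdot\Omega(1)$. There is no bound on $q^*_s/q'_{j_s}$, and many $j$'s in the same group may all need to charge to this single representative term. The same over-charging issue arises in your collision step: the group you zero out may have total weight far exceeding the weight of the one ``bad'' representative whose angle to $\vz_t$ is at least $\pi/6$.

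The paper sidesteps this entirely. Rather than one representative per group, call each individual $j$ with $q'_j>0$ \emph{matched} if $\angle(\vc_j,\vz_{\sigma'(j)})<\pi/6$. Your own angle observations then show that matched $j$'s in the same group must share the same $\vz_{\sigma'(j)}$, and matched $j$'s in different groups must have different $\vz_{\sigma'(j)}$. Hence setting $\hat\va_s=\vz_{\sigma'(j)}$ for any matched $j\in\sigma^{-1}(s)$ (and $\hat\va_s=\vzero$ if no such $j$ exists) is well-defined and automatically collision-free --- no resolution step is needed. The charging is now purely per-$j$: matched $j$'s have $\hat\va_{\sigma(j)}=\vz_{\sigma'(j)}$ exactly, and each unmatched $j$ individually satisfies $\|\vc_j-\vz_{\sigma'(j)}\|_2^2\geq\sin^2(\pi/6)=\tfrac14$ against $\|\vc_j-\hat\va_{\sigma(j)}\|_2^2\leq 2$ (Fact~\ref{fact:non-negative_triangle}), giving a clean factor $8$ for $(\mathrm I)$.
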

\begin{proof}[Proof of Theorem \ref{thm:main}]
We obtain the running time of the algorithm by summing over the three steps.
Step 1 requires $O(mn)$ time to create the input to the weighted $k$-means subroutine, and the subroutine takes $\tkmeans$ time.
Step 2 takes $O(mk^2)$ time because we use $O(m)$ time to compute the angle between each of the $O(k^2)$ pairs of centroids.
In step 3, it takes $O(mk)$ time to solve the optimization problem \eqref{eq:final_optimization}, and it takes time $O(mn)$ to compute the $\theta_i$'s.

The feasibility of $(A,W)$ is clear from the algorithm. We focus on proving the approximation guarantee. We start by showing an upper bound for the objective $\|M-AW\|_F^2$ achieved by our algorithm. For $i=1,\ldots,n$, the $i$-th column of $AW$ is $\theta_i\va_{\sigma(\phi(i))}$, where $\theta_i\in\argmin_\theta\|\vm_i - \theta\va_{\sigma(\phi(i))}\|_2^2$. Therefore,
\begin{align*}
    & \|M-AW\|_F^2\\
    = {} & \sum_{i=1}^n\big\|\vm_i - \theta_i\va_{\sigma(\phi(i))} \big\|_2^2\\
    \leq {} & \sum_{i=1}^n\big\|\vm_i - \|\vm_i\|_2\va_{\sigma(\phi(i))} \big\|_2^2\\
    = {} & \sum_{i=1}^n\|\vm_i\|_2^2\cdot \|\bar\vm_i - \va_{\sigma(\phi(i))}\|_2^2.
\end{align*}
By Fact \ref{fact:center} and $\vc_j = \left(\sum_{i\in\phi^{-1}(j)}\ell_i\bar\vm_i\right)/q_j$, we have
\begin{align}
    &\|M-AW\|_F^2\nonumber\\
    \leq {}  &\sum_{i=1}^n\|\vm_i\|_2^2\cdot \|\bar\vm_i - \va_{\sigma(\phi(i))}\|_2^2\nonumber \\
    = {} & \sum_{i=1}^n\ell_i\cdot \|\bar\vm_i - \va_{\sigma(\phi(i))}\|_2^2\nonumber \\
    = {} & \sum_{i=1}^n\ell_i\cdot \|\bar\vm_i - \vc_{\phi(i)}\|_2^2 + \sum_{i=1}^n\ell_i\cdot \|\vc_{\phi(i)} - \va_{\sigma(\phi(i))}\|_2^2\nonumber \\
    = {} & \sum_{i=1}^n \ell_i \cdot \|\bar\vm_i - \vc_{\phi(i)}\|_2^2 + \sum_{j=1}^k q_j\|\vc_j-\va_{\sigma(j)}\|_2^2.\label{eq:alg}
\end{align}
(\ref{eq:alg}) gives an upper bound for $\|M-AW\|_F^2$. We proceed by giving a lower bound for the objective $\|M-A^\opt W^\opt\|_F^2$ achieved by the optimal solution $(A^\opt, W^\opt)$. We first remove the columns of $A^\opt$ filled with the zero vector and also remove the corresponding rows in $W^\opt$. This doesn't change the product $A^\opt W^\opt$ and doesn't violate the orthogonality requirement either, but the sizes of $A^\opt$ and $W^\opt$ may now change to $m\times k_1$ and $k_1\times n$. We can now assume WLOG that every column $\va^\opt_s$ of $A^\opt$ is a unit vector. Note that each column of $W$ contains at most one non-zero entry, so we have
\begin{align}
&\|M-A^\opt W^\opt\|_F^2\nonumber \\
\geq {} & \sum_{i=1}^n \min_{\genfrac{}{}{0pt}{2}{1\leq s\leq k_1}{\theta\geq 0}}\|\vm_i - \theta\va^\opt_s\|_2^2\nonumber\\
\geq {} & \frac 12\sum_{i=1}^n\|\vm_i\|_2^2\cdot \min_{1\leq s\leq k_1}\|\bar\vm_i - \va^\opt_s\|_2^2\nonumber\\
= {} & \frac 12\sum_{i=1}^n\ell_i\cdot \min_{1\leq s\leq k_1}\|\bar\vm_i - \va^\opt_s\|_2^2,\label{eq:aopt}
\end{align}
where the second inequality follows from Fact \ref{fact:unit}. By the $r$-approximate optimality of $\vc_1,\ldots,\vc_k$, we have
\begin{align}
    & \|M-A^\opt W^\opt\|_F^2\nonumber\\
    \geq {} &\frac 12\sum_{i=1}^n\ell_i\cdot \min_{1\leq s\leq k_1}\|\bar\vm_i - \va^\opt_s\|_2^2 \nonumber\\
    \geq {} & \frac 1{2r} \sum_{i=1}^n\ell_i\cdot \|\bar\vm_i - \vc_{\phi(i)}\|_2^2.\label{eq:c}
\end{align}
Combining (\ref{eq:aopt}) with (\ref{eq:c}), we have
\begin{align}
    & (4r+4)\|M-A^\opt W^\opt\|_F^2\nonumber\\
    \geq &  \sum_{i=1}^n\ell_i(2\|\bar\vm_i - \vc_{\phi(i)}\|_2^2 + 2\min_{1\leq s\leq k_1}\|\bar\vm_i - \va^\opt_s\|_2^2)\nonumber \\
    \geq & \sum_{i=1}^n\ell_i\min_{1\leq s\leq k_1}\|\vc_{\phi(i)} - \va^\opt_s\|_2^2\label{eq:3}\\
    = & \sum_{i=1}^n\ell_i\|\vc_{\phi(i)} - \va^\opt_{\sigma'(\phi(i))}\|_2^2\nonumber\\
    = & \sum_{j=1}^kq_j\|\vc_j - \va^\opt_{\sigma'(j)}\|_2^2,\nonumber
\end{align}
where (\ref{eq:3}) is by Fact \ref{fact:doubled_triangle} and $\sigma'(j)$ is defined to be $\argmin_{1\leq s\leq k_1}\|\vc_j - \va^\opt_s\|_2$. Applying Lemma \ref{lm:prune}, we get
\begin{align}
\label{eq:b}
    & (4r+4)\|M-A^\opt W^\opt\|_F^2\nonumber\\
    \geq {} & \sum_{j=1}^kq_j\|\vc_j - \va^\opt_{\sigma'(j)}\|_2^2\\
    \geq {} & \frac{\sin^2(\pi/12)}{2}\sum_{j=1}^kq_j\|\vc_j - \va_{\sigma(j)}\|_2^2.
\end{align}
Combining (\ref{eq:alg}) with (\ref{eq:c}) and (\ref{eq:b}), we have
\begin{align*}
    & \|M-AW\|_F^2\\
    \leq & \sum_{i=1}^n \ell_i \cdot \|\bar\vm_i - \vc_{\phi(i)}\|_2^2 + \sum_{j=1}^k q_j\|\vc_j-\va_{\sigma(j)}\|_2^2\\
    \leq & \left(2r + \frac{8r + 8}{\sin^2(\pi/ 12)}\right)\|M - A^\opt W^\opt\|_F^2.
\end{align*}
\end{proof}
\section{Experiments}
\label{sec:experiments}
We report on the results of experiments comparing the performance of our algorithm with eight previous algorithms in the literature. For these experiments, we use $k$-means++ as the subroutine for solving $k$-means.
For the single factor orthogonality setting,
our experiments show that our algorithm ensures perfect orthogonality and give similar approximation error as six previous algorithms in the literature that do not guarantee orthogonality.
For this single factor setting, we also directly compare to two previous algorithms that do ensure orthogonality and find that that the performance of our algorithm is superior. One of the previous algorithms has runtime that scales very poorly with inner dimension (and worse error for small inner dimension); the other suffers from poor local minima, leading to large error even with zero noise.
For the double factor orthogonality setting, only two previous algorithms are able to handle this case. None of them ensure perfect orthogonality, while our algorithm does. Further, it has lower error than these previous algorithms.
Our algorithm runs significantly faster than all these other algorithms in both settings.
Thus we achieve the best of both worlds -- stronger approximation guarantees as well as superior practical performance for ONMF.

Specifically, we compare our algorithm (ONMF-apx) with previous algorithms in the more well-studied single-factor orthogonality setting on synthetic data, and defer the experiments on real-world data and in the double-factor orthogonality setting to Appendix \ref{sec:additional-exp}. The previous algorithms we compare with include NMF \citep{lee2001algorithms}, PNMF \citep{yuan2005projective}, ONFS-Ding \citep{ding2006orthogonal}, NHL \citep{yang2007multiplicative}, ONMF-A \citep{choi2008algorithms}, HALS \citep{li2014two}, EM-ONMF \citep{pompili2014two}, and ONMFS \citep{asteris2015orthogonal}.

\paragraph{Experimental Setup}
We generate the input matrix $M\in\bb R^{m\times n}$ by adding noise to the product $M_\truth$ of random non-negative matrices $A_\truth\in\bb R^{m\times k}$ and $W_\truth\in\bb R^{k\times n}$. We make sure that $W_\truth$ has orthogonal rows\footnote{Due to non-negativity, making the rows of $W_\truth$ orthogonal is equivalent to making every column of $W_\truth$ contain at most one non-zero entry. Independently for every column, we pick the location of the non-zero entry uniformly at random.}, and every non-zero entry of $A_\truth$ and $W_\truth$ is independently drawn from the exponential distribution with mean 1. We call $M_\truth = A_\truth W_\truth$ \emph{the planted solution}, and we add iid noise to every entry of $M_\truth$ to obtain $M$. The noise also follows an exponential distribution, and we use the phrase ``noise level'' to denote the mean of that distribution.

\paragraph{Evaluation}
We measure the quality of the matrices $A$ and $W$ output by the algorithms in terms of the approximation error and the orthogonality of $W$.
We measure the approximation error using the Frobenius norm: we compute both the \emph{recovery error} $\|M_\truth - AW\|_F$, which measures how well the output recovers the underlying structure of the input, and the \emph{reconstruction error} $\|M - AW\|_F$, which measures the approximation error to the input matrix that contains iid noise. 
We define the reconstruction error of the planted solution $M_\truth$ as $\|M - M_\truth\|_F$, whose value concentrates well around $\sqrt{2mn}$ times the noise level as shown in the following easy fact:
\begin{fact}
\label{fact:planted-reconstruction}
The mean (resp.\ standard deviation) of $\|M - M_\truth\|_F^2$ is $2mn$ (resp.\ $\sqrt{20mn}$) times the noise level squared.
\end{fact}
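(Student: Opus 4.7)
The plan is to expand the squared Frobenius norm entrywise and reduce the statement to a moment computation for the exponential distribution. Let $\lambda$ denote the noise level, so each entry of $M - M_\truth$ is an independent exponential random variable with mean $\lambda$ (and hence variance $\lambda^2$). Writing $X_{ij} := M_{ij} - (M_\truth)_{ij}$, we have
\begin{equation*}
\|M - M_\truth\|_F^2 = \sum_{i=1}^m\sum_{j=1}^n X_{ij}^2,
\end{equation*}
a sum of $mn$ iid copies of $X^2$ where $X \sim \textup{Exp}(1/\lambda)$.

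First I would compute $\mathbb{E}[X^2]$. Using the general identity $\mathbb{E}[X^k] = k!\,\lambda^k$ for an exponential with mean $\lambda$ (or directly, $\mathbb{E}[X^2] = \textup{Var}(X) + \mathbb{E}[X]^2 = \lambda^2 + \lambda^2 = 2\lambda^2$), linearity of expectation immediately yields $\mathbb{E}[\|M - M_\truth\|_F^2] = 2mn \lambda^2$, which is the mean claim.

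Next, for the standard deviation, I would use independence of the entries to get
\begin{equation*}
\textup{Var}(\|M - M_\truth\|_F^2) = mn \cdot \textup{Var}(X^2) = mn\bigl(\mathbb{E}[X^4] - \mathbb{E}[X^2]^2\bigr).
\end{equation*}
Plugging in $\mathbb{E}[X^4] = 4!\,\lambda^4 = 24\lambda^4$ and $\mathbb{E}[X^2]^2 = 4\lambda^4$ gives $\textup{Var}(X^2) = 20\lambda^4$, so the variance of $\|M - M_\truth\|_F^2$ is $20mn\lambda^4$ and its standard deviation is $\sqrt{20mn}\,\lambda^2$, as claimed.

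There is no real obstacle here; the only thing to be careful about is the parametrization convention for the exponential distribution (mean $\lambda$ versus rate $\lambda$), and confirming that "noise level" in the paper refers to the mean of the exponential — which the preceding paragraph makes explicit. Everything else is a one-line moment calculation together with linearity of expectation and independence.
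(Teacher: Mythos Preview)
Your proof is correct. The paper does not actually prove this fact; it is stated as an ``easy fact'' without proof, so there is nothing to compare against. Your entrywise expansion together with the moment formulas $\mathbb{E}[X^2]=2\lambda^2$ and $\mathbb{E}[X^4]=24\lambda^4$ for an exponential with mean $\lambda$ is exactly the intended one-line computation.
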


We measure the non-orthogonality of $W$ by the Frobenius norm of $WW^\top - I$ after removing the zero rows of $W$ and normalizing the other rows.

\paragraph{Experiment 1}
In the first experiment, we choose $m = 100, n = 5000, k = 10$, and compare our algorithm with previous ones. We run each algorithm independently for 7 times and record the median results in Figure~\ref{fig:exp1}. We found that ONMFS could not finish in a reasonable amount of time, so  we investigate it separately on smaller matrices in experiment 2. We also found that there is a high variance in the approximation error of EM-ONMF because it often converges to a bad local optimum, giving the fluctuating black lines in Figure~\ref{fig:exp1}.
\begin{figure*}[h]
\centering
\includegraphics[width = 0.48\textwidth]{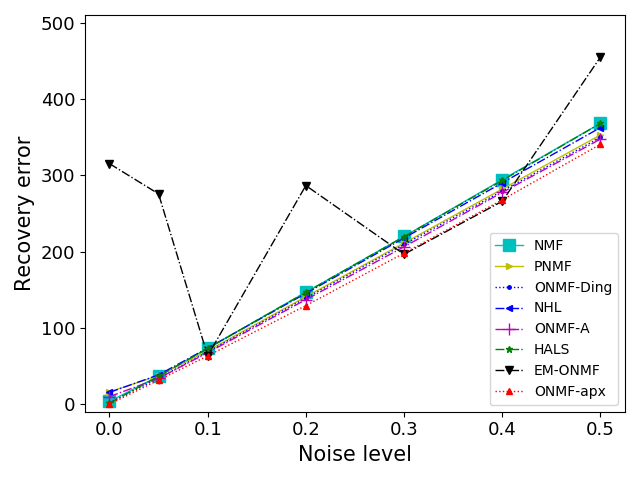}
\includegraphics[width = 0.48\textwidth]{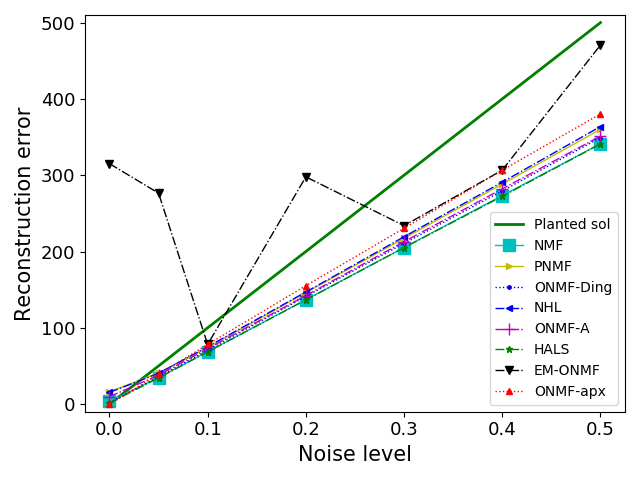}
\includegraphics[width = 0.48\textwidth]{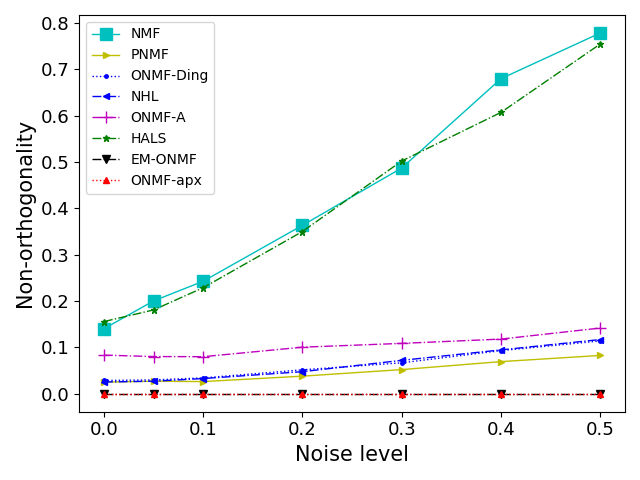}
\includegraphics[width = 0.48\textwidth]{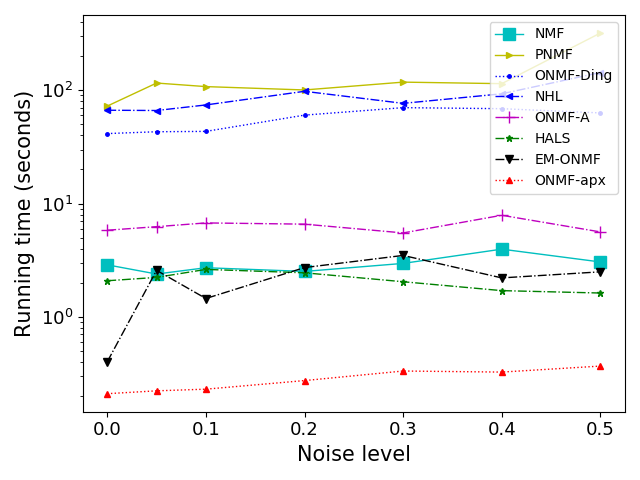}
\caption{Results of experiment 1. From left to right, the plots in the first row show the recovery error and the reconstruction error, and the plots in the second row show the non-orthogonality and the running time. The performance of our algorithm is shown in the red line under the label ONMF-apx.}
\label{fig:exp1}
\end{figure*}

As shown in Figure~\ref{fig:exp1}, our algorithm ensures perfect orthogonality and gives similar approximation error as previous ones which do not guarantee orthogonality. Except for EM-ONMF, none of the other previous algorithms in this experiment output a perfectly orthogonal $W$. Our recovery error is slightly better than previous algorithms, but our reconstruction error is slightly worse. This is because the orthogonality constraint effectively regularizes our solution, making it fit the noise in the input worse but reveal the structure of the input better.
It is worth noting that our algorithm achieves lower reconstruction errors than the planted solution $M_\truth$, and so do most other algorithms in the experiment (the reconstruction error of $M_\truth$ concentrates well around $1000$ times the noise level (thick green line in Figure~\ref{fig:exp1}) by Fact~\ref{fact:planted-reconstruction}).

We would also like to point out that our algorithm runs significantly faster than all the other algorithms considered in this experiment. The bottom right plot of Figure~\ref{fig:exp1} shows the running time on a machine with 1.4 GHz Quad-Core Intel Core i5 processor and 8 GB 2133 MHz LPDDR3 memory (note that the $y$-axis is on logarithmic scale). Our algorithm is based on the $k$-means++ subroutine, which is very efficient. The previous algorithms are based on iterative update and may take a long time to reach a local optimum. 

\paragraph{Experiment 2}
We compare our algorithm with ONMFS \citep{asteris2015orthogonal}, an algorithm that guarantees perfect orthogonality, but runs in time exponential in the squared inner dimension. ONMFS was based on two levels of exhaustive search, which is inefficient when the inner dimension is large. We thus reduce the sizes of the matrices and set $m = 10, n = 50, k = 2$ in this experiment. Our result shows that our algorithm gives smaller error than ONMFS (Figure \ref{fig:exp2}).
\begin{figure*}[h]
\centering
\includegraphics[width = 0.48\textwidth]{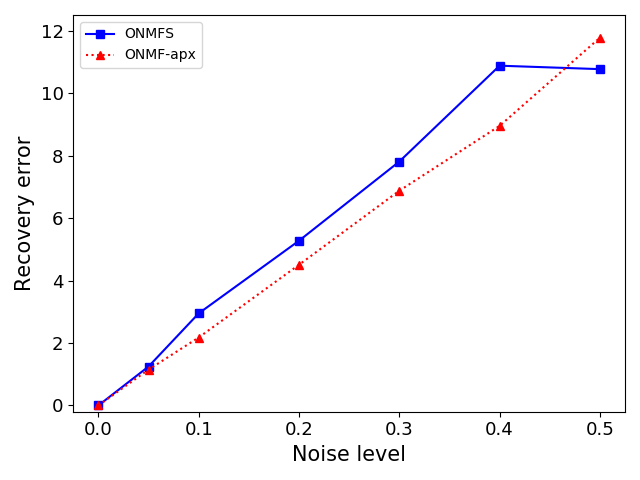}
\includegraphics[width = 0.48\textwidth]{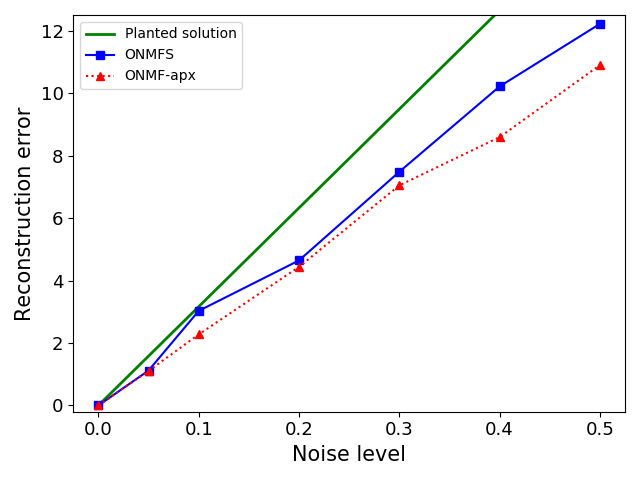}
\caption{Results of experiment 2. From left to right, the plots show the recovery error and the reconstruction error. The non-orthogonality (not shown in figure) is identically zero for both algorithms.}
\label{fig:exp2}
\end{figure*}

\section*{Acknowledgments}
We thank Suyash Gupta and anonymous reviewers for helpful comments on earlier versions of this paper.
\bibliographystyle{plainnat}
\bibliography{ref}

\begin{thebibliography}{52}
\providecommand{\natexlab}[1]{#1}
\providecommand{\url}[1]{\texttt{#1}}
\expandafter\ifx\csname urlstyle\endcsname\relax
  \providecommand{\doi}[1]{doi: #1}\else
  \providecommand{\doi}{doi: \begingroup \urlstyle{rm}\Url}\fi

\bibitem[Ahmadian et~al.(2017)Ahmadian, Norouzi-Fard, Svensson, and
  Ward]{ahmadian2017better}
Sara Ahmadian, Ashkan Norouzi-Fard, Ola Svensson, and Justin Ward.
\newblock Better guarantees for k-means and {E}uclidean k-median by primal-dual
  algorithms.
\newblock In \emph{2017 IEEE 58th Annual Symposium on Foundations of Computer
  Science (FOCS)}, pages 61--72. IEEE, 2017.

\bibitem[Ailon et~al.(2008)Ailon, Charikar, and Newman]{ailon2008aggregating}
Nir Ailon, Moses Charikar, and Alantha Newman.
\newblock Aggregating inconsistent information: ranking and clustering.
\newblock \emph{Journal of the ACM (JACM)}, 55\penalty0 (5):\penalty0 1--27,
  2008.

\bibitem[Alaudah et~al.(2017)Alaudah, Di, and AlRegib]{alaudah2017weakly}
YK~Alaudah, Haibin Di, and Ghassan AlRegib.
\newblock Weakly supervised seismic structure labeling via orthogonal
  non-negative matrix factorization.
\newblock In \emph{79th EAGE Conference and Exhibition 2017}, volume 2017,
  pages 1--5. European Association of Geoscientists \& Engineers, 2017.

\bibitem[Alter et~al.(2000)Alter, Brown, and Botstein]{alter2000singular}
Orly Alter, Patrick~O Brown, and David Botstein.
\newblock Singular value decomposition for genome-wide expression data
  processing and modeling.
\newblock \emph{Proceedings of the National Academy of Sciences}, 97\penalty0
  (18):\penalty0 10101--10106, 2000.

\bibitem[Arora et~al.(2016)Arora, Ge, Kannan, and Moitra]{arora2016computing}
Sanjeev Arora, Rong Ge, Ravi Kannan, and Ankur Moitra.
\newblock Computing a nonnegative matrix factorization---provably.
\newblock \emph{SIAM Journal on Computing}, 45\penalty0 (4):\penalty0
  1582--1611, 2016.

\bibitem[Asteris et~al.(2015)Asteris, Papailiopoulos, and
  Dimakis]{asteris2015orthogonal}
Megasthenis Asteris, Dimitris Papailiopoulos, and Alexandros~G Dimakis.
\newblock Orthogonal {NMF} through subspace exploration.
\newblock In \emph{Advances in Neural Information Processing Systems}, pages
  343--351, 2015.

\bibitem[Bansal et~al.(2004)Bansal, Blum, and Chawla]{bansal2004correlation}
Nikhil Bansal, Avrim Blum, and Shuchi Chawla.
\newblock Correlation clustering.
\newblock \emph{Machine learning}, 56\penalty0 (1-3):\penalty0 89--113, 2004.

\bibitem[Brunet et~al.(2004)Brunet, Tamayo, Golub, and
  Mesirov]{brunet2004metagenes}
Jean-Philippe Brunet, Pablo Tamayo, Todd~R Golub, and Jill~P Mesirov.
\newblock Metagenes and molecular pattern discovery using matrix factorization.
\newblock \emph{Proceedings of the national academy of sciences}, 101\penalty0
  (12):\penalty0 4164--4169, 2004.

\bibitem[Chawla et~al.(2015)Chawla, Makarychev, Schramm, and
  Yaroslavtsev]{chawla2015near}
Shuchi Chawla, Konstantin Makarychev, Tselil Schramm, and Grigory Yaroslavtsev.
\newblock Near optimal {LP} rounding algorithm for correlation clustering on
  complete and complete k-partite graphs.
\newblock In \emph{Proceedings of the forty-seventh annual ACM symposium on
  Theory of computing}, pages 219--228, 2015.

\bibitem[Chen et~al.(2009)Chen, Wang, and Zhang]{chen2009collaborative}
Gang Chen, Fei Wang, and Changshui Zhang.
\newblock Collaborative filtering using orthogonal nonnegative matrix
  tri-factorization.
\newblock \emph{Information Processing \& Management}, 45\penalty0
  (3):\penalty0 368--379, 2009.

\bibitem[Chen et~al.(2018)Chen, Zhang, Liu, and Ye]{chen2018soft}
Yong Chen, Hui Zhang, Rui Liu, and Zhiwen Ye.
\newblock Soft orthogonal non-negative matrix factorization with sparse
  representation: Static and dynamic.
\newblock \emph{Neurocomputing}, 310:\penalty0 148--164, 2018.

\bibitem[Choi(2008)]{choi2008algorithms}
Seungjin Choi.
\newblock Algorithms for orthogonal nonnegative matrix factorization.
\newblock In \emph{2008 ieee international joint conference on neural networks
  (ieee world congress on computational intelligence)}, pages 1828--1832. IEEE,
  2008.

\bibitem[Del~Buono(2009)]{del2009penalty}
Nicoletta Del~Buono.
\newblock A penalty function for computing orthogonal non-negative matrix
  factorizations.
\newblock In \emph{2009 Ninth International Conference on Intelligent Systems
  Design and Applications}, pages 1001--1005. IEEE, 2009.

\bibitem[Devarajan(2008)]{devarajan2008nonnegative}
Karthik Devarajan.
\newblock Nonnegative matrix factorization: an analytical and interpretive tool
  in computational biology.
\newblock \emph{PLoS computational biology}, 4\penalty0 (7), 2008.

\bibitem[Ding et~al.(2006)Ding, Li, Peng, and Park]{ding2006orthogonal}
Chris Ding, Tao Li, Wei Peng, and Haesun Park.
\newblock Orthogonal nonnegative matrix t-factorizations for clustering.
\newblock In \emph{Proceedings of the 12th ACM SIGKDD international conference
  on Knowledge discovery and data mining}, pages 126--135, 2006.

\bibitem[Dua and Graff(2017)]{Dua:2019}
Dheeru Dua and Casey Graff.
\newblock {UCI} machine learning repository, 2017.
\newblock URL \url{http://archive.ics.uci.edu/ml}.

\bibitem[Eckart and Young(1936)]{eckart1936approximation}
Carl Eckart and Gale Young.
\newblock The approximation of one matrix by another of lower rank.
\newblock \emph{Psychometrika}, 1\penalty0 (3):\penalty0 211--218, 1936.

\bibitem[Feldman et~al.(2007)Feldman, Monemizadeh, and Sohler]{feldman2007ptas}
Dan Feldman, Morteza Monemizadeh, and Christian Sohler.
\newblock A {PTAS} for k-means clustering based on weak coresets.
\newblock In \emph{Proceedings of the twenty-third annual symposium on
  Computational geometry}, pages 11--18, 2007.

\bibitem[He et~al.(2020)He, Xu, Ding, and Fan]{he2020low}
Ping He, Xiaohua Xu, Jie Ding, and Baichuan Fan.
\newblock Low-rank nonnegative matrix factorization on {Stiefel} manifold.
\newblock \emph{Information Sciences}, 514:\penalty0 131--148, 2020.

\bibitem[Huang et~al.(2019)Huang, OuYang, Wu, and Bo]{huang2019collaborative}
Meng Huang, JiHong OuYang, Chen Wu, and Liu Bo.
\newblock Collaborative filtering based on orthogonal non-negative matrix
  factorization.
\newblock In \emph{Journal of Physics: Conference Series}, volume 1345, page
  052062. IOP Publishing, 2019.

\bibitem[Kanungo et~al.(2002)Kanungo, Mount, Netanyahu, Piatko, Silverman, and
  Wu]{kanungo2002local}
Tapas Kanungo, David~M Mount, Nathan~S Netanyahu, Christine~D Piatko, Ruth
  Silverman, and Angela~Y Wu.
\newblock A local search approximation algorithm for k-means clustering.
\newblock In \emph{Proceedings of the eighteenth annual symposium on
  Computational geometry}, pages 10--18, 2002.

\bibitem[Kim and Park(2007)]{kim2007sparse}
Hyunsoo Kim and Haesun Park.
\newblock Sparse non-negative matrix factorizations via alternating
  non-negativity-constrained least squares for microarray data analysis.
\newblock \emph{Bioinformatics}, 23\penalty0 (12):\penalty0 1495--1502, 2007.

\bibitem[Kim and Park(2011)]{kim2011fast}
Jingu Kim and Haesun Park.
\newblock Fast nonnegative matrix factorization: An active-set-like method and
  comparisons.
\newblock \emph{SIAM Journal on Scientific Computing}, 33\penalty0
  (6):\penalty0 3261--3281, 2011.

\bibitem[Kim et~al.(2015)Kim, Sael, and Yu]{kim2015mutation}
Sungchul Kim, Lee Sael, and Hwanjo Yu.
\newblock A mutation profile for top-k patient search exploiting gene-ontology
  and orthogonal non-negative matrix factorization.
\newblock \emph{Bioinformatics}, 31\penalty0 (22):\penalty0 3653--3659, 2015.

\bibitem[Kimura et~al.(2016)Kimura, Kudo, and Tanaka]{kimura2016column}
Keigo Kimura, Mineichi Kudo, and Yuzuru Tanaka.
\newblock A column-wise update algorithm for nonnegative matrix factorization
  in {Bregman} divergence with an orthogonal constraint.
\newblock \emph{Machine learning}, 103\penalty0 (2):\penalty0 285--306, 2016.

\bibitem[Kuang et~al.(2012)Kuang, Ding, and Park]{kuang2012symmetric}
Da~Kuang, Chris Ding, and Haesun Park.
\newblock Symmetric nonnegative matrix factorization for graph clustering.
\newblock In \emph{Proceedings of the 2012 SIAM international conference on
  data mining}, pages 106--117. SIAM, 2012.

\bibitem[Lee and Seung(1999)]{lee1999learning}
Daniel~D Lee and H~Sebastian Seung.
\newblock Learning the parts of objects by non-negative matrix factorization.
\newblock \emph{Nature}, 401\penalty0 (6755):\penalty0 788--791, 1999.

\bibitem[Lee and Seung(2001)]{lee2001algorithms}
Daniel~D Lee and H~Sebastian Seung.
\newblock Algorithms for non-negative matrix factorization.
\newblock In \emph{Advances in neural information processing systems}, pages
  556--562, 2001.

\bibitem[Li et~al.(2014{\natexlab{a}})Li, Zhou, and Cichocki]{li2014two}
Bo~Li, Guoxu Zhou, and Andrzej Cichocki.
\newblock Two efficient algorithms for approximately orthogonal nonnegative
  matrix factorization.
\newblock \emph{IEEE Signal Processing Letters}, 22\penalty0 (7):\penalty0
  843--846, 2014{\natexlab{a}}.

\bibitem[Li et~al.(2018)Li, Zhu, Qu, Ye, and Sun]{li2018semi}
Jack~Yutong Li, Ruoqing Zhu, Annie Qu, Han Ye, and Zhankun Sun.
\newblock Semi-orthogonal non-negative matrix factorization.
\newblock \emph{arXiv preprint arXiv:1805.02306}, 2018.

\bibitem[Li et~al.(2014{\natexlab{b}})Li, Bu, Yang, Ji, Chen, and
  Cai]{li2014discriminative}
Ping Li, Jiajun Bu, Yi~Yang, Rongrong Ji, Chun Chen, and Deng Cai.
\newblock Discriminative orthogonal nonnegative matrix factorization with
  flexibility for data representation.
\newblock \emph{Expert systems with applications}, 41\penalty0 (4):\penalty0
  1283--1293, 2014{\natexlab{b}}.

\bibitem[Li et~al.(2001)Li, Hou, Zhang, and Cheng]{li2001learning}
Stan~Z Li, Xin~Wen Hou, Hong~Jiang Zhang, and Qian~Sheng Cheng.
\newblock Learning spatially localized, parts-based representation.
\newblock In \emph{Proceedings of the 2001 IEEE Computer Society Conference on
  Computer Vision and Pattern Recognition. CVPR 2001}, volume~1, pages I--I.
  IEEE, 2001.

\bibitem[Li et~al.(2020)Li, Li, Liu, and Dong]{li2020two}
Wenbo Li, Jicheng Li, Xuenian Liu, and Liqiang Dong.
\newblock Two fast vector-wise update algorithms for orthogonal nonnegative
  matrix factorization with sparsity constraint.
\newblock \emph{Journal of Computational and Applied Mathematics},
  375:\penalty0 112785, 2020.

\bibitem[Lin(2007)]{lin2007projected}
Chih-Jen Lin.
\newblock Projected gradient methods for nonnegative matrix factorization.
\newblock \emph{Neural computation}, 19\penalty0 (10):\penalty0 2756--2779,
  2007.

\bibitem[Ma et~al.(2010)Ma, Zhao, Tan, and Shi]{ma2010orthogonal}
Huifang Ma, Weizhong Zhao, Qing Tan, and Zhongzhi Shi.
\newblock Orthogonal nonnegative matrix tri-factorization for semi-supervised
  document co-clustering.
\newblock In \emph{Pacific-Asia Conference on Knowledge Discovery and Data
  Mining}, pages 189--200. Springer, 2010.

\bibitem[Mirsky(1960)]{mirsky1960symmetric}
L.~Mirsky.
\newblock Symmetric gauge functions and unitarily invariant norms.
\newblock \emph{Quart. J. Math. Oxford Ser. (2)}, 11:\penalty0 50--59, 1960.
\newblock ISSN 0033-5606.
\newblock \doi{10.1093/qmath/11.1.50}.
\newblock URL \url{https://doi.org/10.1093/qmath/11.1.50}.

\bibitem[Paatero and Tapper(1994)]{paatero1994positive}
Pentti Paatero and Unto Tapper.
\newblock Positive matrix factorization: A non-negative factor model with
  optimal utilization of error estimates of data values.
\newblock \emph{Environmetrics}, 5\penalty0 (2):\penalty0 111--126, 1994.

\bibitem[Pan and Ng(2018)]{pan2018orthogonal}
Junjun Pan and Michael~K Ng.
\newblock Orthogonal nonnegative matrix factorization by sparsity and nuclear
  norm optimization.
\newblock \emph{SIAM Journal on Matrix Analysis and Applications}, 39\penalty0
  (2):\penalty0 856--875, 2018.

\bibitem[Papadimitriou et~al.(2000)Papadimitriou, Raghavan, Tamaki, and
  Vempala]{papadimitriou2000latent}
Christos~H Papadimitriou, Prabhakar Raghavan, Hisao Tamaki, and Santosh
  Vempala.
\newblock Latent semantic indexing: A probabilistic analysis.
\newblock \emph{Journal of Computer and System Sciences}, 61\penalty0
  (2):\penalty0 217--235, 2000.

\bibitem[Pauca et~al.(2004)Pauca, Shahnaz, Berry, and Plemmons]{pauca2004text}
V~Paul Pauca, Farial Shahnaz, Michael~W Berry, and Robert~J Plemmons.
\newblock Text mining using non-negative matrix factorizations.
\newblock In \emph{Proceedings of the 2004 SIAM International Conference on
  Data Mining}, pages 452--456. SIAM, 2004.

\bibitem[Pompili et~al.(2013)Pompili, Gillis, Glineur, and
  Absil]{pompili2013onp}
Filippo Pompili, Nicolas Gillis, Fran{\c{c}}ois Glineur, and Pierre-Antoine
  Absil.
\newblock Onp-mf: An orthogonal nonnegative matrix factorization algorithm with
  application to clustering.
\newblock In \emph{ESANN}. Citeseer, 2013.

\bibitem[Pompili et~al.(2014)Pompili, Gillis, Absil, and
  Glineur]{pompili2014two}
Filippo Pompili, Nicolas Gillis, P-A Absil, and Fran{\c{c}}ois Glineur.
\newblock Two algorithms for orthogonal nonnegative matrix factorization with
  application to clustering.
\newblock \emph{Neurocomputing}, 141:\penalty0 15--25, 2014.

\bibitem[Qin et~al.(2016)Qin, Jia, and Li]{qin2016community}
Yaoyao Qin, Caiyan Jia, and Yafang Li.
\newblock Community detection using nonnegative matrix factorization with
  orthogonal constraint.
\newblock In \emph{2016 Eighth International Conference on Advanced
  Computational Intelligence (ICACI)}, pages 49--54. IEEE, 2016.

\bibitem[Vavasis(2010)]{vavasis2010complexity}
Stephen~A Vavasis.
\newblock On the complexity of nonnegative matrix factorization.
\newblock \emph{SIAM Journal on Optimization}, 20\penalty0 (3):\penalty0
  1364--1377, 2010.

\bibitem[Wold et~al.(1987)Wold, Esbensen, and Geladi]{wold1987principal}
Svante Wold, Kim Esbensen, and Paul Geladi.
\newblock Principal component analysis.
\newblock \emph{Chemometrics and intelligent laboratory systems}, 2\penalty0
  (1-3):\penalty0 37--52, 1987.

\bibitem[Xu et~al.(2003)Xu, Liu, and Gong]{xu2003document}
Wei Xu, Xin Liu, and Yihong Gong.
\newblock Document clustering based on non-negative matrix factorization.
\newblock In \emph{Proceedings of the 26th annual international ACM SIGIR
  conference on Research and development in informaion retrieval}, pages
  267--273, 2003.

\bibitem[Yang and Laaksonen(2007)]{yang2007multiplicative}
Zhirong Yang and Jorma Laaksonen.
\newblock Multiplicative updates for non-negative projections.
\newblock \emph{Neurocomputing}, 71\penalty0 (1-3):\penalty0 363--373, 2007.

\bibitem[Yang and Oja(2010)]{yang2010linear}
Zhirong Yang and Erkki Oja.
\newblock Linear and nonlinear projective nonnegative matrix factorization.
\newblock \emph{IEEE Transactions on Neural Networks}, 21\penalty0
  (5):\penalty0 734--749, 2010.

\bibitem[Yoo and Choi(2010)]{yoo2010nonnegative}
Ji-Ho Yoo and Seung-Jin Choi.
\newblock Nonnegative matrix factorization with orthogonality constraints.
\newblock \emph{Journal of computing science and engineering}, 4\penalty0
  (2):\penalty0 97--109, 2010.

\bibitem[Yoo and Choi(2008)]{yoo2008orthogonal}
Jiho Yoo and Seungjin Choi.
\newblock Orthogonal nonnegative matrix factorization: Multiplicative updates
  on {Stiefel} manifolds.
\newblock In \emph{International conference on intelligent data engineering and
  automated learning}, pages 140--147. Springer, 2008.

\bibitem[Yuan and Oja(2005)]{yuan2005projective}
Zhijian Yuan and Erkki Oja.
\newblock Projective nonnegative matrix factorization for image compression and
  feature extraction.
\newblock In \emph{Scandinavian Conference on Image Analysis}, pages 333--342.
  Springer, 2005.

\bibitem[Zhang et~al.(2016)Zhang, Tan, Sheng, Yao, and Shi]{zhang2016efficient}
Wei~Emma Zhang, Mingkui Tan, Quan~Z Sheng, Lina Yao, and Qingfeng Shi.
\newblock Efficient orthogonal non-negative matrix factorization over {Stiefel}
  manifold.
\newblock In \emph{Proceedings of the 25th ACM International on Conference on
  Information and Knowledge Management}, pages 1743--1752, 2016.

\end{thebibliography}
\newpage
\appendix
\section{Proof of Lemma~\ref{lm:round}}
\label{sec:rounding}
Before proving Lemma \ref{lm:round}, we first show how it gives a constant-factor approximation for bipartite correlation clustering. Given a complete bipartite graph with vertex bipartition $U\cup V$ and edges labeled $+$ or $-$, we can construct a binary matrix $M$ whose rows correspond to the vertices in $U$ and columns correspond to the vertices in $V$. An entry of $M$ is $1$ if and only if the corresponding edge is labeled $+$. 
The optimal solution to the correlation clustering problem also gives a binary matrix, where each cluster in the solution gives an all-ones block. Because of the block-wise structure, the matrix can be written in the form $A^\opt W^\opt$, where $A^\opt,W^\opt$ give a feasible solution to the orthogonal non-negative factorization problem for $M$ with the inner-dimension being the number of clusters in the optimal solution, and the squared Frobenius error $E^\opt:=\|M - A^\opt W^\opt\|_F^2$ equals to the optimal number of disagreements for the correlation clustering problem.

By Theorem \ref{thm:k>=n}, we can compute an orthogonal non-negative factorization $A^\fra W^\fra$ with inner dimension $k = \min\{|U|,|V|\}$ such that $E^\fra := \|M - A^\fra W^\fra\|_F^2 \leq 15 E^\opt$.
Note that $E^\fra$ can be decomposed block-wise:
\begin{equation*}
E^\fra := E^\fra_1 + \cdots + E^\fra_k + E^\fra_*,
\end{equation*}
where $E^\fra_i$ is the sum of squared errors in block $i$, and $E^\fra_*$ is the sum of squared errors outside of the $k$ blocks. Applying Lemma \ref{lm:round} to every block, we can compute an orthogonal non-negative factorization $A^\bin W^\bin$ such that every entry of $A^\bin$ and $W^\bin$ are binary, and the sum of squared errors in each block satisfies $E^\bin_i \leq 8E^\fra_i$. We also have $E^\bin_* = E^\fra_*$ because both $A^\bin W^\bin$ and $A^\fra W^\fra$ have zeros outside the $k$ blocks. Summing up, we have
\begin{align*}
E^\bin := {} & \|M - A^\bin W^\bin\|_F^2\\
= {} & E^\bin_1 + \cdots + E^\bin_k + E^\bin_*\\
\leq {} & 8\, E^\fra_1 + \cdots + 8\, E^\fra_k + E^\fra_*\\
\leq {} & 8 E^\fra\\
\leq {} & 8\cdot 15 E^\opt.
\end{align*}
Thus, if we translate every block of $A^\bin W^\bin$ to a cluster of vertices, we get a $8\cdot 15 = 120$ approximate solution to the correlation clustering problem.

We now return to proving Lemma \ref{lm:round}.
\begin{proof}
Write $\vw$ as $(w_1,w_2,\ldots,w_n)^\top$ and $\hat \vw$ as $(\hat w_1,\hat w_2,\ldots,\hat w_n)^\top$. Let $\vm_i$ denote the $i$-th column of $M$. We will construct $\hat\va,\hat\vw$ so that $\|\vm_i-\hat w_i\hat \va\|_2^2\leq 8\|\vm_i-w_i\va\|_2^2$ holds true for all $i$. If some $w_i=0$, we can always set $\hat w_i=0$. Therefore, without loss of generality, we can assume every $w_i$ is non-zero. Let $i^*\in\argmin_{1\leq i\leq n}\|\frac{\vm_i}{w_i}-\va\|_2^2$. Define $\hat\va = \vm_{i^*}$. Now we have $\forall 1\leq i\leq n$,
\begin{align}
& \left\|\vm_i-\frac{w_i}{w_{i^*}}\hat\va\right\|_2^2\nonumber \\
\leq & 2\|\vm_i - w_i\va\|_2^2 + 2\left\|\frac{w_i}{w_{i^*}}\hat\va - w_i\va\right\|_2^2\label{eq:21} \\
= & 2\|\vm_i-w_i\va\|_2^2+2w_i^2\left\|\frac{\hat\va}{w_{i^*}}-\va\right\|_2^2\nonumber \\
\leq & 2\|\vm_i-w_i\va\|_2^2+2w_i^2\left\|\frac{\vm_i}{w_i}-\va\right\|_2^2\label{eq:22} \\
= & 4\|\vm_i-w_i\va\|_2^2.\label{eq:a}
\end{align}
Here (\ref{eq:21}) is by Fact \ref{fact:doubled_triangle}, and (\ref{eq:22}) is by the optimality of $i^*$ and $\hat\va = \vm_{i^*}$.

Let $S\subseteq\{1,\ldots,m\}$ be the support of $\hat\va=\vm_{i^*}$. Decompose $\vm_i$ as $\vm_i=\vm_i^1+\vm_i^2$ where $\vm_i^1$ is supported on $S$ and $\vm_i^2$ is supported on $\bar S$. Let $T$ be the support of $\vm_i^1$. Let $p\in[0,1]$ denote $\frac{|T|}{|S|}$. Define $\hat w_i=1$ if $p\geq \frac 12$ and $\hat w_i=0$ otherwise. Now we have 
\begin{align}
& \left\|\vm_i-\frac{w_i}{w_{i^*}}\hat\va\right\|_2^2\nonumber \\
= &\left\|\vm_i^1-\frac{w_i}{w_{i^*}}\hat\va\right\|_2^2+\|\vm_i^2\|_2^2\nonumber \\
= & |S|\left(p\left(1-\frac{w_i}{w_{i^*}}\right)^2+(1-p)\left(\frac{w_i}{w_{i^*}}\right)^2\right)+\|\vm_i^2\|_2^2\nonumber \\
\geq & |S|p(1-p)+\|\vm_i^2\|_2^2\label{eq:23}\\
\geq & \frac{1}{2}|S|\min\{p,1-p\}+\|\vm_i^2\|_2^2\nonumber\\
= & \frac 12 \|\vm_i^1-\hat w_i\hat \va\|_2^2+\|\vm_i^2\|_2^2\nonumber\\
\geq & \frac{1}{2}(\|\vm_i^1-\hat w_i\hat\va\|_2^2+\|\vm_i^2\|_2^2)\nonumber\\
= &\frac 12 \|\vm_i-\hat w_i\hat\va\|_2^2.\label{eq:w}
\end{align}
Here (\ref{eq:23}) is by Cauchy-Schwarz: $$\left(p(1-t)^2 + (1-p)t^2\right)((1-p) + p) \geq \left(\sqrt{p(1-p)}\cdot (1-t) + \sqrt{p(1-p)}\cdot t\right)^2 = p(1-p).$$

Combining (\ref{eq:a}) and (\ref{eq:w}), we have $\|\vm_i-\hat w_i\hat \va\|_2^2\leq 8\|\vm_i-w_i\va\|_2^2$. Moreover, it's clear that $\hat\va$ and $\hat\vw = (\hat w_1,\ldots,\hat w_n)^\top$ can be computed in poly-time.
\end{proof}

\section{Proof of Fact~\ref{fact:unit}}
\label{sec:proof-unit}
\begin{proof}
The claim holds trivially when either $\vx$ or $\vy$ is the zero vector. Now we consider $\vx$ being a unit vector and $\vy$ being a non-zero vector. Let $\alpha\in[0,\pi/2]$ denote the angle between $\vx$ and $\vy$. Note that $\|\vy - \theta\vx\|_2$ is at least the distance from point $\vy$ to the line defined by $\{\theta\vx:\theta\in\bb R\}$, so we have $\|\vy - \theta\vx\|_2\geq \|\vy\|_2\sin\alpha$. On the other hand, $\|\bar \vy-\vx\|_2^2 = 2 - 2\langle \bar\vy,\vx\rangle = 2-2\cos\alpha$. Therefore, the lemma reduces to
\begin{equation*}
    \sin^2\alpha\geq 1-\cos\alpha,
\end{equation*}
which is obviously true because $1 = \sin^2\alpha + \cos^2\alpha$ and $\cos\alpha\geq \cos^2\alpha$.
\end{proof}
\section{Proof of Theorem~\ref{thm:single}}
\label{sec:proof-single}
\begin{proof}
It is clear that $(A,W)$ is a feasible solution, and the computation we need besides the weighted $k$-means subroutine can be done in time linear in the number of entries in $M$, which justifies the claimed running time. We now prove that it achieves an objective at most $2r$ times that achieved by the optimal solution $(A^\opt, W^\opt)$. We can assume WLOG that each column $\va_s^\opt$ of $A^\opt$ is either a unit vector or the zero vector because we can scale up a column of $A$ and scale down the corresponding row of $W$ by the same factor without changing the product $AW$. Since the rows of $W^\opt$ are non-negative and orthogonal, they have disjoint supports, so there exists $\phi':\{1,\ldots,n\}\rightarrow \{1,\ldots,k\}$ and $\theta_1',\ldots,\theta_n'\in\bb R_{\geq 0}$ such that for all $i=1,\ldots,n$, the $i$-th column of $W^\opt$ is $\theta_i'\ve_{\phi'(i)}$.

Now we have
\begin{align}
& \|M - A^\opt W^\opt\|_F^2\nonumber \\
= & \sum_{i=1}^n\|\vm_i - \theta_i'\va_{\phi'(i)}\|_2^2\nonumber \\
\geq & \frac 12\sum_{i=1}^n\|\vm_i\|_2^2\|\bar \vm_i - \va_{\phi'(i)}\|_2^2\label{eq:1}\\
\geq & \frac 1{2r}\sum_{i=1}^n\|\vm_i\|_2^2\|\bar \vm_i - \vc_{\phi(i)}\|_2^2\label{eq:2}\\
= & \frac 1{2r}\sum_{i=1}^n\|\vm_i - \|\vm_i\|_2\vc_{\phi(i)}\|_2^2\nonumber \\
\geq & \frac 1{2r}\sum_{i=1}^n\|\vm_i - \theta_i \vc_{\phi(i)}\|_2^2\label{eq:9}\\
= & \frac 1{2r}\|M - AW\|_F^2.\nonumber
\end{align}
Here, (\ref{eq:1}) is by Fact \ref{fact:unit}, (\ref{eq:2}) is because $(\vc,\phi)$ is an $r$-approximate solution to (\ref{eq:k-means}), and (\ref{eq:9}) is because $\theta_i \in\argmin_\theta\|\vm_i - \theta\vc_{\phi(i)}\|_2^2$.
\end{proof}
\section{Proof of Lemma~\ref{lm:prune}}
\label{sec:proof-prune}
\begin{proof}
First, we show the following inequality for $q_j'$ instead of $q_j$: 
\begin{equation}
\label{eq:q'}
\sum_{j=1}^kq'_j\|\vc_j-\va_{\sigma(j)}\|_2^2\leq 8\sum_{j=1}^kq'_j\|\vc_j-\vz_{\sigma'(j)}\|_2^2.
\end{equation}
We start by constructing an alternative feasible solution to (\ref{eq:final_optimization}): $\va_1',\ldots,\va_k'\in\{\vzero, \vz_1,\ldots,\vz_{k_1}\}$. For any $\vc_j$ with $q'_j>0$, we say $j$ is ``matched'' if $\angle(\vc_j,\vz_{\sigma'(j)})< \frac \pi 6$. Note that if $\sigma(j_1) = \sigma(j_2)$ and they are both ``matched'', then we must have $\vz_{\sigma'(j_1)} = \vz_{\sigma'(j_2)}$, because otherwise $\frac\pi 2 = \angle (\vz_{\sigma'(j_1)}, \vz_{\sigma'(j_2)})\leq \angle (\vc_{j_1},\vz_{\sigma'(j_1)}) + \angle (\vc_{j_2},\vz_{\sigma'(j_2)}) + \angle (\vc_{j_1},\vc_{j_2})<\frac\pi 6 + \frac\pi 6 + \frac\pi 6$, a contradiction. Also, if $\sigma(j_1) \neq \sigma(j_2)$ and they are both ``matched'', then we must have $\vz_{\sigma'(j_1)} \neq \vz_{\sigma'(j_2)}$, because otherwise $\frac\pi 3<\angle(\vc_{j_1},\vc_{j_2})\leq \angle(\vc_{j_1}, \vz_{\sigma'(j_1)}) + \angle (\vc_{j_2},\vz_{\sigma'(j_2)})< \frac \pi 6 + \frac\pi 6$, a contradiction again. Therefore, we can uniquely define $\va'_s$ to be $\vz_{\sigma'(j)}$ whenever there exists a ``matched'' $j$ in $\sigma^{-1}(s)$, and we know different $s$ must correspond to different $\va'_s$. When such a ``matched'' $j$ in $\sigma^{-1}(s)$ does not exist, we simply define $\va'_s = \vzero$. Now $\va_1',\ldots,\va_k'$ are orthogonal to each other, so by the optimality of $\va_1,\ldots,\va_k$ in solving (\ref{eq:final_optimization}), we have 
\begin{equation*}
\sum_{j=1}^kq'_j\|\vc_j-\va_{\sigma(j)}\|_2^2\leq \sum_{j=1}^kq'_j\|\vc_j-\va'_{\sigma(j)}\|_2^2.
\end{equation*}
In order to prove (\ref{eq:q'}), we now only need to show that for every $j$ with $q_j'>0$, $\|\vc_j-\va'_{\sigma(j)}\|_2^2\leq 8 \|\vc_j-\vz_{\sigma'(j)}\|_2^2$. This is obviously true when $j$ is ``matched'' since $\va'_{\sigma(j)} = \vz_{\sigma'(j)}$. When $j$ is not ``matched'', we have $\angle (\vc_j, \vz_{\sigma'(j)})\geq \frac\pi 6$, so $\|\vc_j - \vz_{\sigma'(j)}\|_2^2\geq \sin^2\frac\pi 6= 1/4$, while $\|\vc_j-\va'_{\sigma(j)}\|_2^2 \leq \|\vc_j\|_2^2 + \|\va'_{\sigma(j)}\|_2^2 \leq 2$ by Fact \ref{fact:non-negative_triangle}. Therefore, $\|\vc_j-\va'_{\sigma(j)}\|_2^2\leq 8 \|\vc_j-\vz_{\sigma'(j)}\|_2^2$ is also true when $j$ is not ``matched''.

Now we prove 
\begin{equation}
\label{eq:q-q'}
\sum_{j=1}^k(q_j-q'_j)\|\vc_j-\va_{\sigma(j)}\|_2^2\leq \frac{2}{\sin^2(\pi/12)}\sum_{j=1}^k(q_j-q'_j)\|\vc_j-\vz_{\sigma'(j)}\|_2^2.
\end{equation}
We can decompose $q_j - q'_j$ by the iterations of the weight reduction step. Let $\bb I_t(j)$ denote the indicator for $\vc_j$ being chosen in the $t$-th iteration of weight reduction, and let $\Delta_t\geq 0$ denote the decrease in weight in the $t$-th iteration. We have $q_j - q'_j = \sum_{t} \bb I_t(j)\Delta_t$. Swapping sums, \eqref{eq:q-q'} is equivalent to
\begin{equation*}
\sum_t\Delta_t\sum_{j=1}^k\bb I_t(j)\|\vc_j-\va_{\sigma(j)}\|_2^2
\leq
\frac{2}{\sin^2(\pi/12)}
\sum_t\Delta_t\sum_{j=1}^k\bb I_t(j)\|\vc_j-\vz_{\sigma'(j)}\|_2^2.
\end{equation*}
Note that for a fixed $t$, $\bb I_t(j) = 1$ if and only if $j\in\{j_1,j_2\}$, where pair $(j_1,j_2)$ is selected in the $t$-th iteration of the weight reduction step. Thus, to prove \eqref{eq:q-q'}, it suffices
to prove that whenever $(j_1,j_2)$ is selected in the weight reduction step, we have $\|\vc_{j_1}-\va_{\sigma(j_1)}\|_2^2 + \|\vc_{j_2}-\va_{\sigma(j_2)}\|_2^2 \leq \frac{2}{\sin^2(\pi/12)}\left(\|\vc_{j_1}-\vz_{\sigma'(j_1)}\|_2^2 + \|\vc_{j_2}-\vz_{\sigma'(j_2)}\|_2^2\right)$.
Define $\alpha_1 := \angle (\vc_{j_1}, \vz_{\sigma'(j_1)})$ and $\alpha_2:= \angle (\vc_{j_2}, \vz_{\sigma'(j_2)})$. Since $\angle(\vc_{j_1},\vc_{j_2})\in[\pi/6,\pi/3]$, we always have $\alpha_1 + \alpha_2\geq \pi/6$, whether or not $\sigma'(j_1) = \sigma'(j_2)$. Therefore, we have $\|\vc_{j_1}-\vz_{\sigma'(j_1)}\|_2^2 + \|\vc_{j_2}-\vz_{\sigma'(j_2)}\|_2^2\geq \sin^2\alpha_1 + \sin^2\alpha_2\geq 2\sin^2\frac\pi {12}$ by the convexity and monotonicity of $\sin^2x$ over $[0,\pi/2]$. On the other hand, $\|\vc_{j_1}-\va_{\sigma(j_1)}\|_2^2 + \|\vc_{j_2}-\va_{\sigma(j_2)}\|_2^2\leq \|\vc_{j_1}\|_2^2 + \|\va_{\sigma(j_1)}\|_2^2 + \|\vc_{j_2}\|_2^2 + \|\va_{\sigma(j_2)}\|_2^2\leq 4$ by Fact \ref{fact:non-negative_triangle}. This concludes the proof of $\|\vc_{j_1}-\va_{\sigma(j_1)}\|_2^2 + \|\vc_{j_2}-\va_{\sigma(j_2)}\|_2^2 \leq \frac{2}{\sin^2(\pi/12)}\left(\|\vc_{j_1}-\vz_{\sigma'(j_1)}\|_2^2 + \|\vc_{j_2}-\vz_{\sigma'(j_2)}\|_2^2\right)$.

Combining (\ref{eq:q'}) with (\ref{eq:q-q'}) proves the Lemma.
\end{proof}
\section{Proof of Theorem~\ref{thm:k>=n}}
\label{sec:proof-k>=n}
\begin{proof}
By symmetry ($\|M-AW\|_F^2 = \|M^\top - W^\top A^\top\|_F^2$), we can assume WLOG that $k\geq n$. In this special case, the first step of the algorithm, solving weighted $k$-means, becomes trivial. We can simply choose $\phi(i) = i,\vc_i = \bar\vm_i$ and $q_i = \ell_i$. The second and the third steps remain the same.

To analyze the approximation ratio, we first write $\|M-AW\|_F^2$ as 
\begin{equation}
\label{eq:7}
\|M-AW\|_F^2 = \sum_{i=1}^n\|\vm_i - \theta_i \va_{\sigma(i)}\|_2^2 = \sum_{i=1}^n q_i\|\bar\vm_i - \bar\theta_i \va_{\sigma(i)}\|_2^2,
\end{equation}
where $\bar\theta_i = \left\{\begin{array}{ll}\frac{\langle \bar\vm_i,\va_{\sigma(i)}\rangle}{\|\va_{\sigma(i)}\|_2^2}, & \textup{if }\va_{\sigma(i)}\neq \vzero\\ 0,& \textup{if }\va_{\sigma(i)} = \vzero\end{array}\right.$.

Similarly to Lemma \ref{lm:prune}, we have the following lemma (proved in Appendix \ref{appendix}):
\begin{lemma}
\label{lm:prune'}
Let $\vz_1,\ldots,\vz_{k_1}\in\bb R^m_{\geq 0}$ be non-negative unit vectors that are orthogonal to each other. For any $\sigma':\{1,\ldots,n\}\rightarrow\{1,\ldots,k'\}$, we have
\begin{equation}
\label{eq:k>=n}
    \sum_{i=1}^n q_i\|\bar\vm_i - \bar \theta_i \va_{\sigma(i)}\|_2^2 \leq \frac{1}{\sin^2(\pi/12)}\sum_{i=1}^n q_i\|\bar \vm_i - \langle \bar \vm_i, \vz_{\sigma'(i)}\rangle \vz_{\sigma'(i)}\|_2^2.
\end{equation}
\end{lemma}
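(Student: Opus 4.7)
The plan is to follow the structure of the proof of Lemma~\ref{lm:prune}, splitting the weight $q_i$ into $q_i'$ (the reduced weight after the weight-reduction step in Section~\ref{sec:algorithm}) and $q_i - q_i'$, and establishing the desired bound for each part separately. The key observation enabling the improved constant (from $2/\sin^2(\pi/12)$ in Lemma~\ref{lm:prune} down to $1/\sin^2(\pi/12)$ here) is that with $\vc_i = \bar\vm_i$ each $\bar\vm_i$ is either a unit vector (contributing to the sum) or zero (in which case $q_i = \ell_i = 0$ and the term vanishes), and on the left-hand side $\bar\theta_i \va_{\sigma(i)}$ is the orthogonal projection of $\bar\vm_i$ onto the line through $\va_{\sigma(i)}$, so $\|\bar\vm_i - \bar\theta_i \va_{\sigma(i)}\|_2^2 \leq \|\bar\vm_i\|_2^2 \leq 1$.

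For the $q_i'$ part I would reuse the construction from the proof of Lemma~\ref{lm:prune}: define auxiliary orthogonal vectors $\va_1', \ldots, \va_k' \in \{\vzero, \vz_1, \ldots, \vz_{k_1}\}$, setting $\va_s'$ to $\vz_{\sigma'(i)}$ whenever some ``matched'' $i \in \sigma^{-1}(s)$ exists (i.e.\ $\angle(\bar\vm_i, \vz_{\sigma'(i)}) < \pi/6$), and $\va_s' = \vzero$ otherwise; the same argument as in Lemma~\ref{lm:prune} shows this is well-defined and orthogonal. Combining the optimality of $(\va_1, \ldots, \va_k)$ for (\ref{eq:final_optimization}) with $\|\bar\vm_i - \bar\theta_i \va_{\sigma(i)}\|_2^2 \leq \|\bar\vm_i - \va_{\sigma(i)}\|_2^2$ (since $\bar\theta_i$ is optimal) yields $\sum_{i:q_i'>0} q_i' \|\bar\vm_i - \bar\theta_i \va_{\sigma(i)}\|_2^2 \leq \sum_{i:q_i'>0} q_i' \|\bar\vm_i - \va_{\sigma(i)}'\|_2^2$. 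It then suffices to verify the pointwise inequality $\|\bar\vm_i - \va_{\sigma(i)}'\|_2^2 \leq \frac{1}{\sin^2(\pi/12)} \sin^2\alpha_i$ with $\alpha_i := \angle(\bar\vm_i, \vz_{\sigma'(i)})$: for matched $i$ the ratio simplifies to $\frac{2}{1 + \cos\alpha_i} \leq \frac{2}{1 + \cos(\pi/6)} = 8 - 4\sqrt{3}$, which is at most $\frac{1}{\sin^2(\pi/12)} = 8 + 4\sqrt{3}$; for unmatched $i$, Fact~\ref{fact:non-negative_triangle} gives $\|\bar\vm_i - \va_{\sigma(i)}'\|_2^2 \leq 2$ while $\sin^2\alpha_i \geq \sin^2(\pi/6) = 1/4$, so the ratio is at most $8$.

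For the $q_i - q_i'$ part I would decompose $q_i - q_i' = \sum_t \bb I_t(i) \Delta_t$ over the iterations of the weight-reduction step, swap sums, and reduce to showing, for each selected pair $(i_1, i_2)$ with $\angle(\bar\vm_{i_1}, \bar\vm_{i_2}) \in [\pi/6, \pi/3]$, that $\|\bar\vm_{i_1} - \bar\theta_{i_1} \va_{\sigma(i_1)}\|_2^2 + \|\bar\vm_{i_2} - \bar\theta_{i_2} \va_{\sigma(i_2)}\|_2^2 \leq \frac{1}{\sin^2(\pi/12)}(\sin^2\alpha_{i_1} + \sin^2\alpha_{i_2})$. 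The left side is at most $2$ by the unit-norm/projection observation above, and the angle triangle inequality (split into the cases $\sigma'(i_1) = \sigma'(i_2)$ and $\sigma'(i_1) \neq \sigma'(i_2)$, using orthogonality of the $\vz_s$'s in the latter) gives $\alpha_{i_1} + \alpha_{i_2} \geq \pi/6$, which combined with minimizing $\sin^2 x + \sin^2 y$ on that constraint line yields $\sin^2\alpha_{i_1} + \sin^2\alpha_{i_2} \geq 2\sin^2(\pi/12)$. The main subtle point is the constant tracking: the factor-of-two improvement over Lemma~\ref{lm:prune} comes precisely from this tighter bound of $2$ on the LHS (versus the $4$ used in Lemma~\ref{lm:prune}, where both summands $\|\vc_{j_1} - \va_{\sigma(j_1)}\|_2^2$ and $\|\vc_{j_2} - \va_{\sigma(j_2)}\|_2^2$ can individually reach $2$), so once the LHS bound is sharpened the rest of the argument transfers verbatim. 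Adding the two parts gives the claimed inequality.
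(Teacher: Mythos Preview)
Your proposal is correct and follows essentially the same approach as the paper's proof: split $q_i = q_i' + (q_i - q_i')$, handle the $q_i'$ part via the auxiliary orthogonal family $\va_1',\ldots,\va_k'$ together with optimality of $(\va_1,\ldots,\va_k)$ in \eqref{eq:final_optimization}, and handle the $q_i - q_i'$ part by the pairwise argument on weight-reduction iterations using the angle bound $\alpha_{i_1}+\alpha_{i_2}\ge\pi/6$ and the projection bound $\|\bar\vm_i-\bar\theta_i\va_{\sigma(i)}\|_2^2\le 1$. The only cosmetic difference is that for the $q_i'$ part the paper first applies Fact~\ref{fact:unit} to pass from $\|\bar\vm_i-\va'_{\sigma(i)}\|_2^2$ to $2\|\bar\vm_i-\langle\bar\vm_i,\va'_{\sigma(i)}\rangle\va'_{\sigma(i)}\|_2^2$ and then bounds the latter by $4\sin^2\alpha_i$ (giving constant $8$), whereas you bound $\|\bar\vm_i-\va'_{\sigma(i)}\|_2^2/\sin^2\alpha_i$ directly; both yield a constant $\le 1/\sin^2(\pi/12)$, so the arguments are equivalent.
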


Suppose the optimal solution is $(A^\opt,W^\opt)$. Again, we remove the columns of $A^\opt$ filled with the zero vector and also remove the corresponding rows in $W^\opt$. Suppose the sizes of $A^\opt$ and $W^\opt$ now change to $m\times k_1$ and $k_1\times n$. We assume WLOG that every column $\va^\opt_s$ of $A^\opt$ is a unit vector. Suppose the $i$-th column of $W^\opt$ is $\theta_i'\ve_{\phi'(i)}$. We have
\begin{align}
    \|M-A^\opt W^\opt\|_F^2 = &\sum_{i=1}^n\|\vm_i - \theta_i'\va^\opt_{\phi'(i)}\|_2^2\nonumber\\ 
    \geq & \sum_{i=1}^n\|\vm_i - \langle \vm_i, \va^\opt_{\phi'(i)} \rangle\va^\opt_{\phi'(i)}\|_2^2\nonumber \\ 
    = & \sum_{i=1}^nq_i\|\bar\vm_i - \langle \bar\vm_i, \va^\opt_{\phi'(i)}\rangle\va^\opt_{\phi'(i)}\|_2^2.\label{eq:8}
\end{align}
Setting $\vz_{\sigma'(i)}$ in (\ref{eq:k>=n}) to be $\va^\opt_{\phi'(i)}$ and combining it with (\ref{eq:7}) and (\ref{eq:8}), we have $\|M-AW\|_F^2\leq \frac{1}{\sin^2(\pi/12)}\|M-A^\opt W^\opt\|_F^2$.
\end{proof}
\section{Proof of Lemma~\ref{lm:prune'}}
\label{appendix}
\begin{proof}
The proof is very similar to the proof of Lemma \ref{lm:prune}. First, we show the following inequality for $q_i'$ instead of $q_i$: 
\begin{equation}
\label{eq:q''}
\sum_{i=1}^nq'_i\|\bar \vm_i-\bar\theta_i \va_{\sigma(i)}\|_2^2\leq 8\sum_{i=1}^nq'_i\|\bar\vm_i-\langle \bar\vm_i, \vz_{\sigma'(i)}\rangle \vz_{\sigma'(i)}\|_2^2.
\end{equation}
Note that $\bar\theta_i\in\argmin_{\theta\geq 0}\|\bar\vm_i - \theta \va_{\sigma(i)}\|_2^2$. Therefore,
\begin{equation*}
    \sum_{i=1}^nq'_i\|\bar \vm_i-\bar\theta_i \va_{\sigma(i)}\|_2^2\leq \sum_{i=1}^nq'_i\|\bar\vm_i - \va_{\sigma(i)}\|_2^2.
\end{equation*}
We now construct an alternative feasible solution to (\ref{eq:final_optimization}): $\va_1',\ldots,\va_k'\in\{\vzero, \vz_1,\ldots,\vz_{k_1}\}$. For any $\bar\vm_i$ with $q'_i>0$, we say $i$ is ``matched'' if $\angle(\bar\vm_i,\vz_{\sigma'(i)})< \frac \pi 6$. Note that if $\sigma(i_1) = \sigma(i_2)$ and they are both ``matched'', then we must have $\vz_{\sigma'(i_1)} = \vz_{\sigma'(i_2)}$, because otherwise $\frac\pi 2 = \angle (\vz_{\sigma'(i_1)}, \vz_{\sigma'(i_2)})\leq \angle (\bar\vm_{i_1},\vz_{\sigma'(i_1)}) + \angle (\bar\vm_{i_2},\vz_{\sigma'(i_2)}) + \angle (\bar\vm_{i_1},\bar\vm_{i_2})<\frac\pi 6 + \frac\pi 6 + \frac\pi 6$, a contradiction. Also, if $\sigma(i_1) \neq \sigma(i_2)$ and they are both ``matched'', then we must have $\vz_{\sigma'(i_1)} \neq \vz_{\sigma'(i_2)}$, because otherwise $\frac\pi 3<\angle(\bar\vm_{i_1},\bar\vm_{i_2})\leq \angle(\bar\vm_{i_1}, \vz_{\sigma'(i_1)}) + \angle (\bar\vm_{i_2},\vz_{\sigma'(i_2)})< \frac \pi 6 + \frac\pi 6$, a contradiction again. Therefore, we can uniquely define $\va'_s$ to be $\vz_{\sigma'(i)}$ whenever there exists a ``matched'' $i$ in $\sigma^{-1}(s)$, and we know different $s$ must correspond to different $\va'_s$. When such a ``matched'' $i$ in $\sigma^{-1}(s)$ does not exist, we simply define $\va'_s = \vzero$. Now $\va_1',\ldots,\va_k'$ are orthogonal to each other, so by the optimality of $\va_1,\ldots,\va_k$ in solving (\ref{eq:final_optimization}), we have 
\begin{equation*}
\sum_{i=1}^nq'_i\|\bar\vm_i-\va_{\sigma(i)}\|_2^2\leq \sum_{i=1}^nq'_i\|\bar\vm_i-\va'_{\sigma(i)}\|_2^2\leq 2\sum_{i=1}^nq'_i\|\bar\vm_i-\langle \bar\vm_i, \va'_{\sigma(i)}\rangle\va'_{\sigma(i)}\|_2^2,
\end{equation*}
where the second inequality is by Fact \ref{fact:unit} and the fact that $\va'_{\sigma(i)}$ is either the zero vector or a unit vector.

In order to prove (\ref{eq:q''}), we now only need to show that for every $i$ with $q_i'>0$, 
\begin{equation}
\label{eq:5}
\|\bar\vm_i-\langle\bar\vm_i, \va'_{\sigma(i)} \rangle \va'_{\sigma(i)}\|_2^2\leq 4 \|\bar\vm_i-\langle\bar\vm_i, \vz_{\sigma'(i)} \rangle\vz_{\sigma'(i)}\|_2^2.
\end{equation}
This is obviously true when $i$ is ``matched'' since $\va'_{\sigma(i)} = \vz_{\sigma'(i)}$. When $i$ is not ``matched'', we have $\angle (\bar\vm_i, \vz_{\sigma'(i)})\geq \frac\pi 6$. Since $q_i\geq q_i'>0$, we know $\bar\vm_i$ is a unit vector rather than the zero vector, so $\|\bar\vm_i - \langle \bar\vm_i, \vz_{\sigma'(i)} \rangle\vz_{\sigma'(i)}\|_2^2\geq \sin^2\frac\pi 6= 1/4$, while $\|\bar\vm_i-\langle \bar\vm_i, \va'_{\sigma(i)} \rangle\va'_{\sigma(i)}\|_2^2 = \|\bar\vm_i\|_2^2 - \langle\bar\vm_i,\va'_{\sigma(i)} \rangle^2 \leq 1$. Therefore, (\ref{eq:5}) is also true when $j$ is not ``matched''.

Now we prove 
\begin{equation}
\label{eq:q-q''}
\sum_{i=1}^n(q_i-q'_i)\|\bar\vm_i-\bar\theta_i \va_{\sigma(i)}\|_2^2\leq \frac{1}{\sin^2(\pi/12)}\sum_{i=1}^n(q_i-q'_i)\|\bar\vm_i-\langle\bar\vm_i, \vz_{\sigma'(i)} \rangle\vz_{\sigma'(i)}\|_2^2.
\end{equation}
Similarly to how we proved \eqref{eq:q-q'}, it suffices to prove that whenever $(i_1,i_2)$ is selected in the weight reduction step, we have
\begin{align}
    & \|\bar\vm_{i_1}-\bar\theta_{i_1} \va_{\sigma(i_1)}\|_2^2 + \|\bar\vm_{i_2}-\bar\theta_{i_2} \va_{\sigma(i_2)}\|_2^2\nonumber\\
    \leq & \frac{1}{\sin^2(\pi/12)}\left(\|\bar\vm_{i_1}-\langle \bar\vm_{i_1}, \vz_{\sigma'(i_1)}\rangle \vz_{\sigma'(i_1)}\|_2^2 + \|\bar\vm_{i_2}-\langle\bar\vm_{i_2}, \vz_{\sigma'(i_2)}\rangle\vz_{\sigma'(i_2)}\|_2^2\right).\label{eq:6}
\end{align}
Note that here $\bar\vm_{i_1}$ and $\bar\vm_{i_2}$ are both unit vectors because otherwise they would have zero weights ($q_{i_1} = q_{i_1}' = 0$ or $q_{i_2} = q_{i_2}' = 0$) and wouldn't be selected in the weight reduction step. 

Define $\alpha_1 := \angle (\bar\vm_{i_1}, \vz_{\sigma'(i_1)})$ and $\alpha_2:= \angle (\bar\vm_{i_2}, \vz_{\sigma'(i_2)})$. Since $\angle(\bar\vm_{i_1},\bar\vm_{i_2})\in[\pi/6,\pi/3]$, we always have $\alpha_1 + \alpha_2\geq \pi/6$, whether or not $\sigma'(i_1) = \sigma'(i_2)$. Therefore, we have 
\begin{align*}
    \left(\|\bar\vm_{i_1}-\langle \bar\vm_{i_1}, \vz_{\sigma'(i_1)}\rangle \vz_{\sigma'(i_1)}\|_2^2 + \|\bar\vm_{i_2}-\langle\bar\vm_{i_2}, \vz_{\sigma'(i_2)}\rangle\vz_{\sigma'(i_2)}\|_2^2\right) & \geq \sin^2\alpha_1 + \sin^2\alpha_2\\
    & \geq 2\sin^2\frac\pi {12}
\end{align*}
by the convexity and monotonicity of $\sin^2x$ over $[0,\pi/2]$. On the other hand, since $\bar\theta_i\in\argmin_{\theta\geq 0}\|\bar\vm_i - \theta \va_{\sigma(i)}\|_2^2$, we know $\|\bar\vm_i - \bar\theta_i\va_{\sigma(i)}\|_2^2\leq \|\bar\vm_i\|_2^2$, so
\begin{align*}
    & \|\bar\vm_{i_1}-\bar\theta_{i_1} \va_{\sigma(i_1)}\|_2^2 + \|\bar\vm_{i_2}-\bar\theta_{i_2} \va_{\sigma(i_2)}\|_2^2\\
    \leq  & \|\bar\vm_{i_1}\|_2^2 + \|\bar\vm_{i_2}\|_2^2\\
    = & 2.
\end{align*}
This concludes the proof of (\ref{eq:6}). 

Combining (\ref{eq:q''}) with (\ref{eq:q-q''}) proves the Lemma.
\end{proof}
\section{Additional Experiments}
\label{sec:additional-exp}
\subsection{Experiments on Real-world Data}
We run our single-factor orthogonality algorithm on real-world datasets from \citet{Dua:2019}. 
Following the setting in \citet{asteris2015orthogonal}, we choose $k = 6$ and use the \emph{relative squared Frobenius error (RSFE)}  to measure the performance of our algorithm. Suppose $M$ is the data matrix and $A, W$ are the output of the algorithm, RSFE is defined as $\|M - AW\|_F^2 / \|M\|_F^2$. Note that the orthogonality constraint is posed on the left factor $A$, so we need to first transpose the data matrix before running our algorithm in Section \ref{sec:single}.
Our algorithm achieves similar RSFE compared to the best previous algorithm recorded in Table 2 of \citet{asteris2015orthogonal} on each dataset, and achieves slightly smaller (better) RSFE on datasets {\scshape Arcence Train} and {\scshape Mfeat Pix}.
\begin{table}[h]
\begin{center}
\begin{tabular}{ |c|c|c|} 
 \hline
 Dataset & RSFE of Our Algorithm & Smallest RSFE recorded in \citet{asteris2015orthogonal} \\ 
 \hline
{\scshape Amzn Com.\ Rev} & $0.0467$ & 0.0462 \\
{\scshape Arcence Train} & $\mathbf{0.0760}$ & 0.0788 \\
{\scshape Mfeat Pix} & $\mathbf{0.2382}$ & 0.2447\\
{\scshape Pems Train} & 0.1279 & 0.1278\\
{\scshape BoW:KOS} & 0.7685 & 0.7609\\
{\scshape BoW:NIPS} & 0.7386 & 0.7252\\
 \hline
\end{tabular}
\caption{Experimental results on real-world data in the single-factor orthogonality setting.}
\label{table}
\end{center}
\end{table}

\subsection{Experiments in the Double-factor Orthogonality Setting}
\label{sec:double-experiments}
\begin{figure}[h]
\centering
\includegraphics[width = 0.48\textwidth]{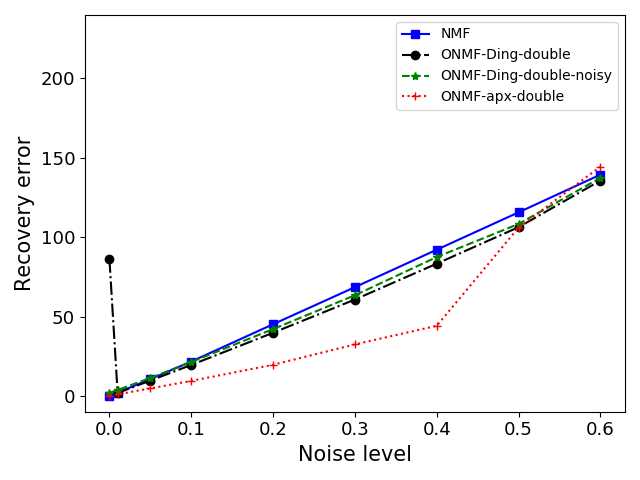}
\includegraphics[width = 0.48\textwidth]{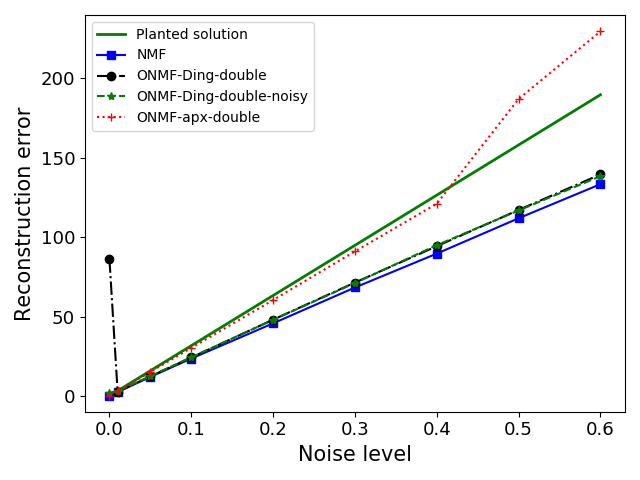}
\includegraphics[width = 0.48\textwidth]{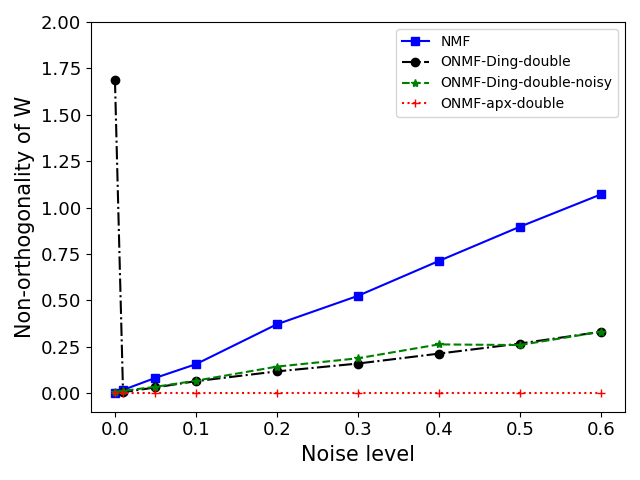}
\includegraphics[width = 0.48\textwidth]{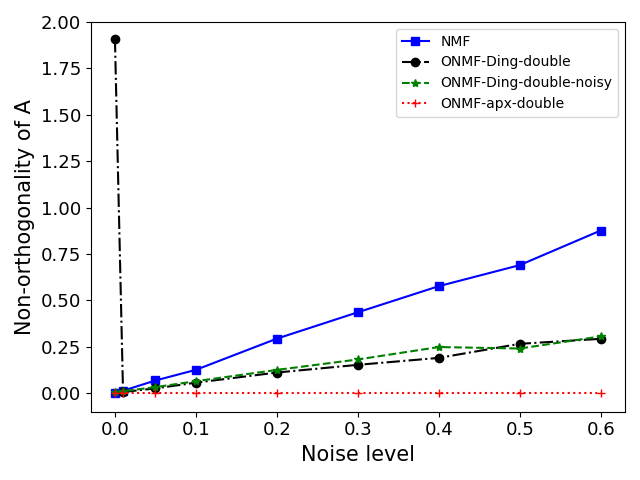}
\caption{Experimental results in the double-factor orthogonality setting.}
\label{fig:exp-double}
\end{figure}
We extend our experiments in Section \ref{sec:experiments} to the double-factor orthogonality setting, where we generate $A_\truth$ with orthogonal columns and $W_\truth$ with orthogonal rows. The only previous algorithm we know that handles the double-factor orthogonality is ONMF-Ding-double \citep{ding2006orthogonal}, which factorizes the input matrix $M\in \bb R^{m\times n}$ as the product of three non-negative matrices $M \approx ASW$ where $A\in\bb R^{m\times k}, S\in\bb R^{k\times k}, W\in\bb R^{k\times n}$, with the aim of making $A$ and $W$ satisfy the orthogonality constraint approximately. We compare our algorithm with ONMF-Ding-double together with the NMF algorithm \citep{lee2001algorithms} that does not aim for orthogonality. 
We keep other settings in Section \ref{sec:experiments} unchanged while choosing $m = 100, n = 500, k = 5$ so that ONMF-Ding-double can converge in a short time. While we run most algorithms 7 times and record the median results in Figure~\ref{fig:exp-double}, 
ONMF-Ding-double (resp.\ ONMF-Ding-double-noisy) is only run once at noise level 0.01 (resp.\ 0.00 and 0.01) because it took too long to finish.
As shown in Figure \ref{fig:exp-double}, our algorithm (ONMF-apx-double) is able to ensure perfect orthogonality for both factors and achieve better recovery error when the noise level is below 0.5.
We note that most of the reconstruction errors of the algorithms are below the reconstruction error of the planted solution $M_\truth$, which concentrates well around $10^{2.5} \approx 316$ times the noise level (thick green line in Figure~\ref{fig:exp-double}) by Fact~\ref{fact:planted-reconstruction}.
We also observe that ONMF-Ding-double takes more and more iterations to reach a solution with a reasonable approximation error as the noise level decreases towards zero, and it gets stuck at a suboptimal solution when the noise level is zero.
(Adding additional iid noise from the exponential distribution with mean $0.01$ to the input alleviates this issue, but that also slightly inflates the recovery error as shown by the green lines corresponding to ONMF-Ding-double-noisy in Figure \ref{fig:exp-double}.)

\subsection{Experiments for Different Inner Dimensions}
In experiment 1 (Section~\ref{sec:experiments}), we fixed $k = 10$ and studied how the performances of the algorithms vary with noise levels in the single-factor orthogonality setting. Now we fix the noise level to be $0.5$ and study the effect of different choices of the inner dimension $k$. We keep all other settings unchanged and record the results in Figure~\ref{fig:exp-k}. As in experiment 1, our algorithm (ONMF-apx) achieves perfect orthogonality with a significant improvement in the running time, and has smaller recovery errors than most previous algorithms. The experiment shows a common trend that the recovery (resp.\ reconstruction) error increases (resp.\ decreases) with the inner dimension $k$, although the amount of such change in the error is insignificant (note that the $y$-axes of the plots in the first row of Figure~\ref{fig:exp-k} do not start from zero). Of all algorithms studied in the experiment, the errors of our algorithm change the least with the inner dimension.
\begin{figure}[h]
\centering
\includegraphics[width = 0.48\textwidth]{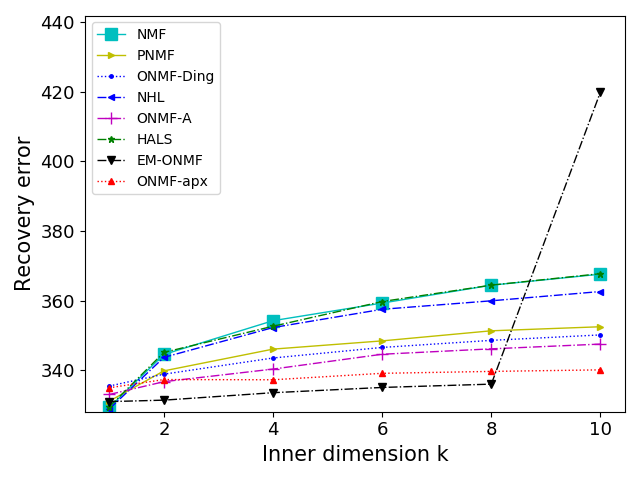}
\includegraphics[width = 0.48\textwidth]{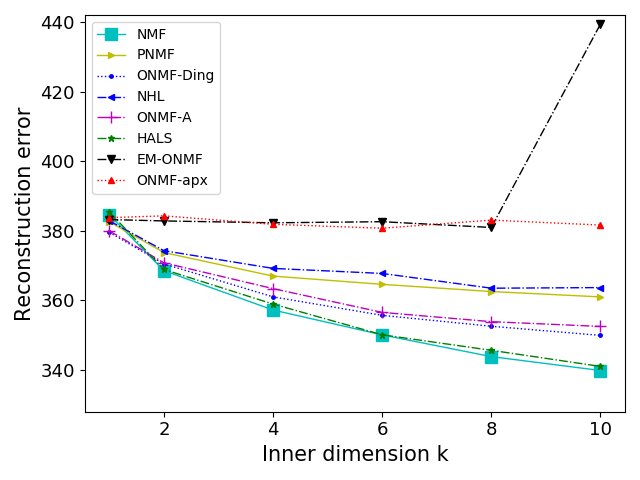}
\includegraphics[width = 0.48\textwidth]{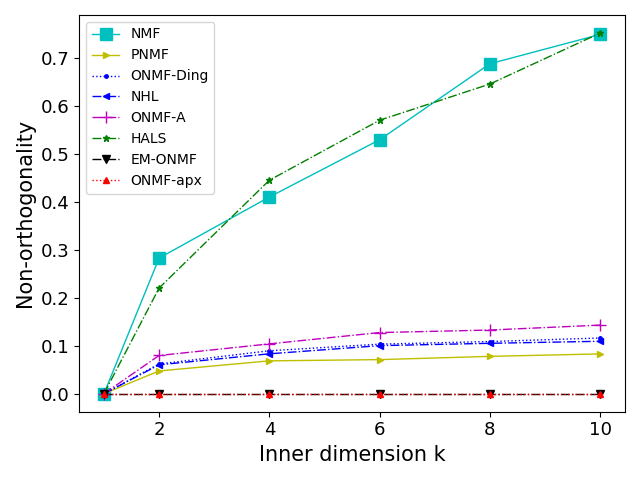}
\includegraphics[width = 0.48\textwidth]{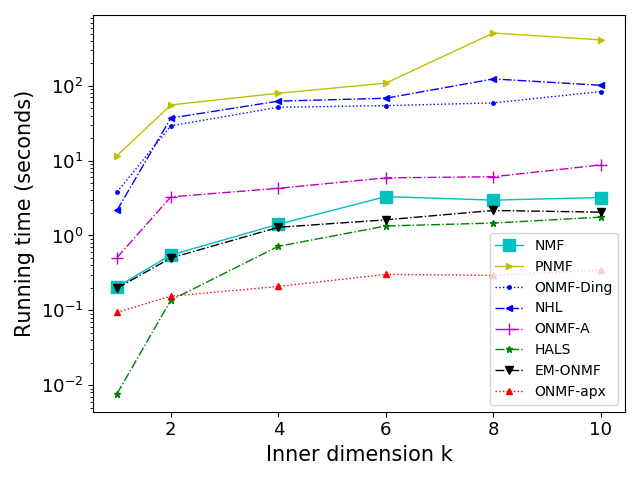}
\caption{Experimental results with different inner dimensions $k$ in the single-factor orthogonality setting.}
\label{fig:exp-k}
\end{figure}
\end{document}